
\documentclass[reqno,12pt]{amsart}
\usepackage{amssymb,amsmath,amsthm,amscd,array}
\setlength{\topmargin}{0truein} \setlength{\headheight}{.25truein}
\setlength{\headsep}{.25truein} \setlength{\textheight}{9truein}
\setlength{\footskip}{.25truein}
\setlength{\oddsidemargin}{0truein}
\setlength{\evensidemargin}{0truein}
\setlength{\textwidth}{6.5truein} \setlength{\voffset}{-0.5truein}
\setlength{\hoffset}{0truein}
\usepackage{cite}
\usepackage{mathdots}
\usepackage{dsfont}
\usepackage{graphicx}
\usepackage{float}
\usepackage{array}
\usepackage{subcaption}
\usepackage{bm}
\usepackage{thmtools}
\usepackage{caption}
\usepackage{subcaption}


\def\C{\mathbb{C}}

\def\P{\mathbb{P}}
\def\HH{\hat{\mathcal{H}}}
\def\ST{\hat{\mathcal{ST}}}
\def\SS{\mathcal{S}}
\def\TT{\hat{\mathcal{T}}}
\def\RR{\mathbb{R}}


\def\bp{{\bm \phi}}
\def\bt{{\bm \theta}}
\def\bw{{\bm \omega}}
\def\g{{\bf g}}
\def\X{{\bf X}}
\def\y{{\bf y}}
\def\Y{{\bf Y}}

\newcommand{\p}{{\bf p}}

\newcommand{\rr}{{\bf r}}
\newcommand{\vv}{{\bf v}}

\newcommand{\0}{{\bf 0}}
\newcommand{\1}{{\bf 1}}


\def\B{{\bf B}}
\def\BB{\hat{{\bf B}}}
\def\C{{\bf C}}
\def\D{{\bf D}}
\def\I{{\bf I}}
\def\J{{\bf J}}
\def\L{{\bf L}}
\def\P{{\bf P}}


\def\E{\hat{E}}
\def\Hh{{\hat{H}}}
\def\G{\hat{G}}
\def\R{\hat{R}}
\def\V{\hat{V}}
\def\T{\hat{T}}
\DeclareMathOperator{\vol}{Vol}
\def\up{\text{upper}}
\def\low{\text{lower}}
\def\stab{\text{stable}}
\def\lock{\text{locked}}


\newtheorem{theorem}{Theorem}
\newtheorem{lemma}[theorem]{Lemma}

\newtheorem{proposition}[theorem]{Proposition}

\theoremstyle{remark}
\newtheorem*{remark}{Remark}

\numberwithin{equation}{section}

\newtheorem{definition}{Definition}


\begin{document}


\title[Phase-locking in the Kuramoto Model with Asymmetric Coupling]{Volume Bounds for the Phase-locking Region in the Kuramoto Model with Asymmetric Coupling}
\author{Timothy Ferguson}

\begin{abstract}
The Kuramoto model is a system of nonlinear differential equations that models networks of coupled oscillators and is often used to study synchronization among them. It has been observed that if the natural frequencies of the oscillators are similar they will phase-lock, meaning that they oscillate at a common frequency with fixed phase differences. Conversely, we do not observe this behavior when the natural frequencies are very dissimilar. In \cite{doi:10.1137/16M110335X} Bronski and the author gave upper and lower bounds for the volume of the set of frequencies exhibiting phase-locking behavior. This was done under the assumption that any two oscillators affect each other with equal strength. In this paper the author generalizes these upper and lower bounds by removing this assumption. Similar to \cite{doi:10.1137/16M110335X} where the upper and lower bounds are sums over spanning trees of the network, our generalized upper and lower bounds are sums over certain directed subgraphs of the network. In particular, our lower bound is a sum over directed spanning trees. Finally, we numerically simulate the dependence of the number of directed spanning trees on the presence of certain two edge motifs in the network and compare this dependence with that of the synchronization of oscillator models found in \cite{10.3389/fncom.2011.00028}.
\end{abstract}

\maketitle

\smallskip
\noindent \footnotesize\textbf{Keywords.} synchronization, phase-locking, asymmetric coupling, directed spanning trees
\\
\\
\smallskip
\noindent \footnotesize\textbf{AMS subject classifications.} 34C15, 34D20, 92B25 



\section{Introduction} \label{sec:introduction}

The Kuramoto model was first defined in \cite{MR762432} by Y. Kuramoto in 1984 as a means to model the behavior of coupled oscillator networks. Applications of these oscillator networks include. For a review of synchronization phenomena as well as the history of the Kuramoto model see \cite{MR1783382}. The Kuramoto model on a network with $N$ ocillators is th system of $N$ coupled nonlinear differential equations
\begin{align} \label{eq:model}
\frac{d\theta_i}{dt} = \omega_i - \sum_{j \ne i} \gamma_{ij} \sin(\theta_i - \theta_j) \quad \text{for} \quad i \in \{1,\dots,N\}
\end{align}
where $\theta_i$ denotes the phase angle of the $i$th oscillator, $\omega_i$ the natural frequency of the $i$th oscillator, and $\gamma_{ij}$ the strength of the coupling allowing the $j$th oscillator to influence the $i$th oscillator. In many cases it is assumed that $\gamma_{ij} = \gamma_{ji}$ which physically means that the $j$th oscillator affects the $i$th oscillator in the same way as the $i$th oscillator affects the $j$th oscillator. In our previous paper \cite{doi:10.1137/16M110335X} Bronski and the author studied this special case, but in this paper we make no such assumption. In other words, we allow asymmetric coupling. This allows the model to be used to study a wider range of phenomena. For example, \cite{10.3389/fnhum.2010.00190} gives an extensive overview of the application of various versions of the Kuramoto model to studying cortical oscillation in neurobiology.

In particular, the Kuramoto model is used to study the phenomenon of synchronization (or phase-locking) whereby the network of oscillators rotate at a common frequency with fixed angle differences. In other words they rotate in unison as a whole rather than as individual oscillators. It has been observed that if the natural frequencies $\omega_i$ have similar values then the system synchronizes while it doesn't if their values are too far apart. It is then natural to study the set of natural frequencies for which the system can synchronize. One can also consider if there exists a system synchronizes to a stable state, namely a state to which the system returns if it is perturbed by a sufficiently small perturbation. As discussed in \cite{doi:10.1137/16M110335X} there is an extensive literature on this subject in the symmetric case when $\gamma_{ij} = \gamma_{ji}$. For example, see \cite{Dorfler.Chertkov.Bullo.2013,article44444}.

However, the asymmetric case, which we consider here, has received much less attention. In \cite{doi:10.1063/1.4954221} Skardal, Taylor, and Sun studied synchronization on directed networks and derived a function of the natural frequencies which can be used to optimize synchronization. Also, Restrepo, Ott, and Hunt \cite{articleyomama} studied directed networks under the assumption that each node is connected to many other nodes. They were able to approximate the onset of synchronization as well as the level of synchrony even with mixed positive/negative interactions. In \cite{7874076} Rao, Li, and Ogorzalek found sufficient conditions for the directed Kuramoto model to synchronize given a pacemaker, an oscillator to which all others are forced to rotate at the same rate and potentially achieve the same phase. Also, Shmidt, Papachristodoulou, M\"{u}nz, and Allg\"{o}wer \cite{SCHMIDT20123008} found several synchronization conditions for the directed Kuramoto model with delayed coupling.

As in \cite{doi:10.1137/16M110335X} we consider the problem of estimating the volume of the set of natural frequencies for which the Kuramoto model has a synchronous or stably synchronous solution. We do this as before by obtaining upper and lower bounds for these volumes as sums over spanning trees in the network. As a result we are able to deduce that these volumes can be well understood in terms of the number of spanning trees in the network.

The number of spanning trees in a network is known as the complexity of the network and is an important quantity for understanding its structure and properties. For example, the complexity of a network, along with the number of other classes of subgraphs, is directly related to the network's reliability \cite{Colbourn:1987:CNR:535891}. Unfortunately, it is well known that enumerating and/or counting spanning trees can be computationally expensive. Fortunately, due to the importance of this quantity, there is an extensive literature on this subject. Most of these results are either for random or networks or networks with uniform structure. For example, McKay \cite{MCKAY1983149} derived upper bounds for the complexity of $k$-regular networks. Later, Noga \cite{articlebilha,MCKAY1983149} found an asymptotic formula for the growth of the complexity of $k$-regular networks as $k$ tends to infinity. Similarly, Wu \cite{0305-4470-10-6-004} derived an asymptotic formula the complexity of lattices. \cite{articlebilha} spanning trees regular networks. Then Zhang, Wu, and Comellas \cite{ZHANG2014206} derived exact analytic formulas for the complexity of Apollonian networks. Apollonian networks are a type of small-world network related to Apollonian packing \cite{PhysRevLett.94.018702} and share structural properties with neuronal networks \cite{articlegivemeabreak}. Finally, we note that by Kirchoff's famous Matrix Tree Theorem the complexity of any network can be computed via sub determinants of the associated adjacency matrix.

There is also literature regarding the complexity of random networks. For example, Lyons, Peled, and Schramm \cite{articlenoway} study the asymptotic growth of the complexity of the largest connected component, known as the giant component, in an Erd\H{o}s-R\'{e}nyi random network. Also, in \cite{10.1007/978-3-319-50901-3_16}  Mokhlissi,  Lotfi, Debnath, and Marraki study the complexity of special classes of small-world networks. Small-world networks have been shown to provide good models for many real-world phenomena \cite{Amaral11149}. Also, in \cite{COHEN19879} Barlotti etal. studied the sensitivity of network complexity for random networks depending on certain parameters.

We now return to our discussion of synchronous (phase-locked) solutions of the Kuramoto model \eqref{eq:model}. Mathematically, a phase-locked solution is a solution of the form
\begin{align} \label{eq:phaselocked}
\bt(t) = \overline{\bw} t \1 + \bt_0
\end{align}
where $\overline{\bw}$ is the common frequency of the oscillators, $\1 = (1,\dots,1)^\top$ is the vector of all ones, and $\bt_0$ records the fixed phase angle differences. Plugging this into \eqref{eq:model} results in the equation
\begin{align} \label{eq:newphaselocked}
\overline{\bw} \1 = \bw - \g(\bt_0)
\end{align}
where $\bw$ is the vectors whose $i$th component is the natural frequency of the $i$th oscillator, $\omega_i$, and
\begin{align}
\g(\bt)_i := \sum_{j \ne i} \gamma_{ij} \sin(\theta_i - \theta_j) \quad \text{for} \quad i \in \{1,\dots,N\}.
\end{align}
Notice that with an appropriate choice of $\overline{\bw}$ we can suppose that $\bw$ is mean zero. Physically, this is equivalent to rotating the reference frame. Note that if $\gamma_{ij} = \gamma_{ji}$, then $\overline{\bw}$ is the mean of $\bw$. Therefore \eqref{eq:newphaselocked} states that $\bw$ has a phase-locked solution if and only if there exists a $\bt$ for which $\bw$ is the projection of $\g(\bt)$ onto the mean zero hyperplane $\1^\perp$. In other words, $\P \bw = \P \g(\bt)$ where $\P$ is an orthonormal projection onto $\1^\perp$. It should be emphasized that this is not equivalent to $\bw = \g(\bt)$ since the image of $\g$ is not generally constrained to $\1^\perp$ when the assumption $\gamma_{ij} = \gamma_{ji}$ is removed. See Figure \ref{fig:complex} for example. Furthermore, it is easy to see that this state is stable precisely when $\J$, the Jacobian of $\g$, is positive definite on $\1^\perp$. We restrict to $\1^\perp$ since $\g$ is invariant when $\bt$ is shifted by a multiple of $\1$. This motivates the definition of the following sets.

\begin{definition} \label{def:sets}
Define the sets
\begin{align} \label{eq:locked}
\Omega_\lock &:= \g(\Theta_\lock) \quad \text{where} \quad \Theta_\lock := \RR^N,
\end{align}
and
\begin{align} \label{eq:stable}
\Omega_\stab &:= \g(\Theta_\stab) \quad \text{where} \quad \Theta_\stab := \{ \bt \in \RR^N : \J(\bt) > 0 \text{ on } \1^\perp \}.
\end{align}
Clearly $\Theta_\stab \subseteq \Theta_\lock$ hence $\Omega_\stab \subseteq \Omega_\lock$.
\end{definition}

By the preceding discussion we then see that $\eqref{eq:model}$ has a phase-locked or stable phase-locked solution if and only if $\P \bw \in \P \Omega_\lock$ or $\P \bw \in \P \Omega_\stab$ respectively. However, as is demonstrated in Figure \ref{fig:complex}, the regions $\Omega_\lock$ and $\Omega_\stab$ as well as their projections $\P \Omega_\lock$ and $\P \Omega_\stab$ can be quite complex. As a result, we restrict ourselves to the problem of estimating their size. This motivates the following definition.

\begin{figure}[H]
\captionsetup{font=scriptsize}
\centering
\begin{subfigure}{.45\textwidth}
  \centering
  \includegraphics[width=.8\linewidth]{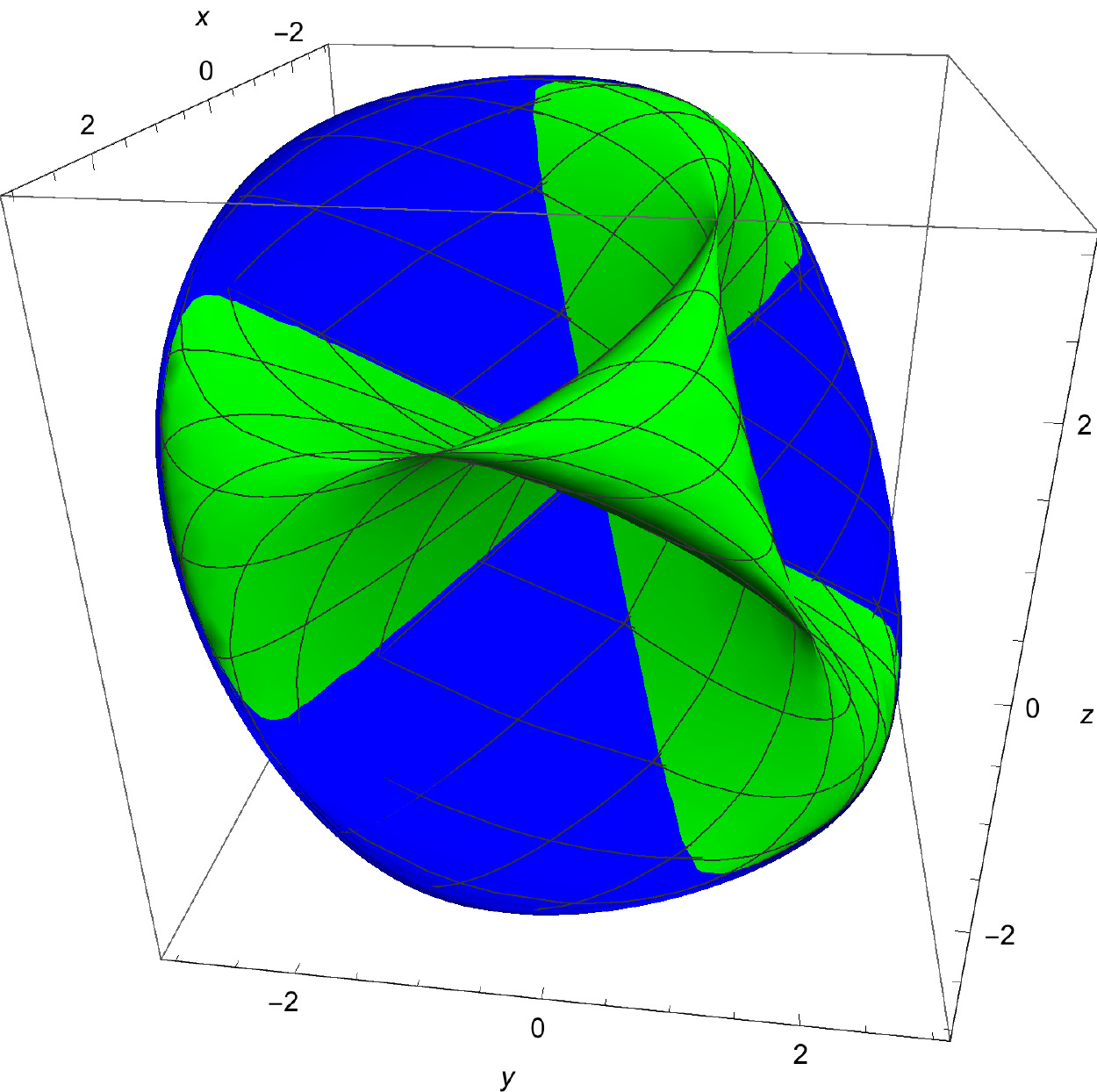}
\end{subfigure}
\begin{subfigure}{.45\textwidth}
  \centering
  \includegraphics[width=0.8\linewidth]{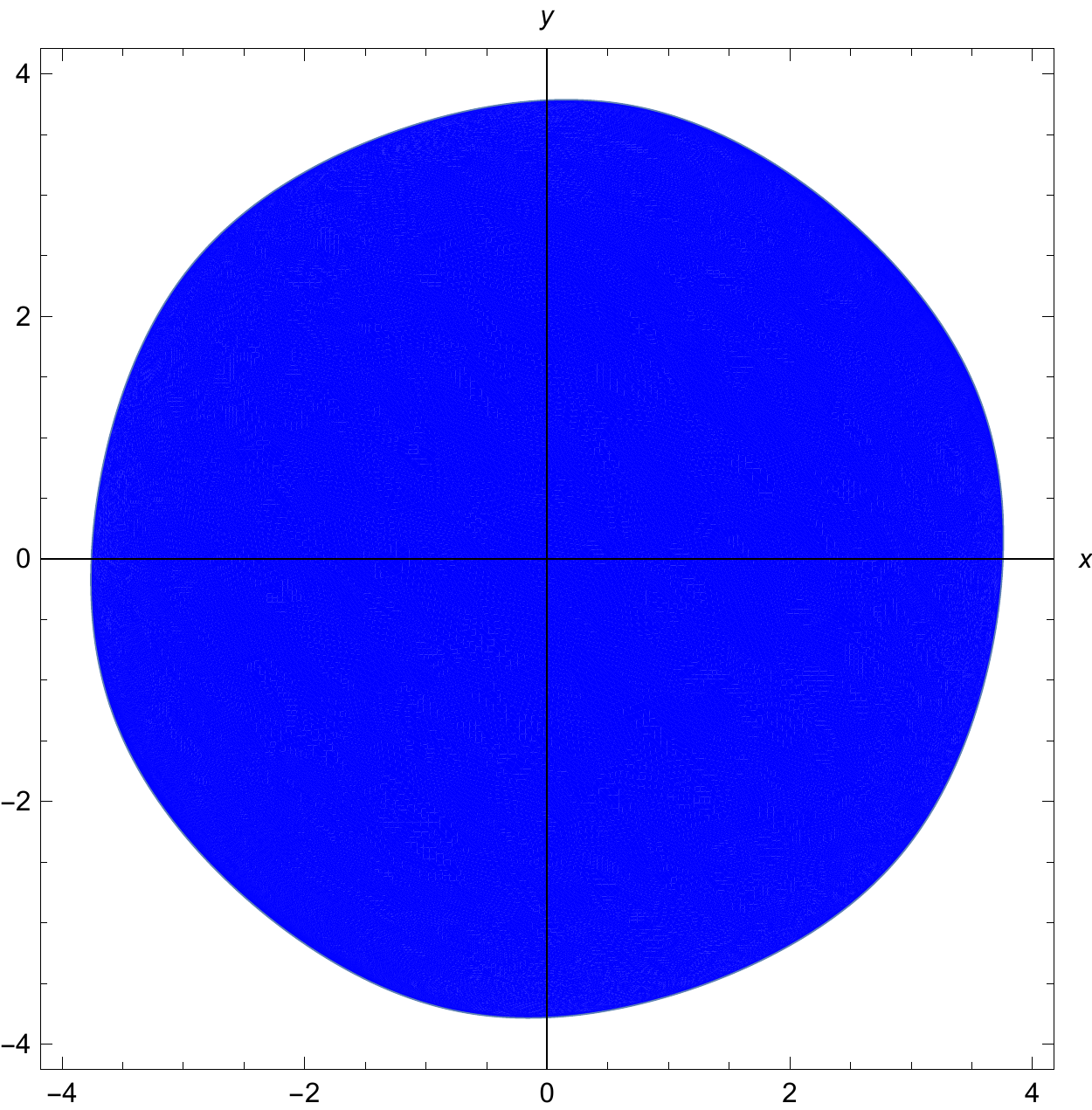}
\end{subfigure}
\caption{The left plot shows $\Omega_\lock$ for the complete graph on three vertices with asymmetric edge weights $\gamma_{12} = \gamma_{23} = \gamma_{31} = 1$ and $\gamma_{21} = \gamma_{32} = \gamma_{13} = 2$. The subset $\Omega_\stab$ is plotted as the blue subset. Note that neither of these sets lie completely in $\1^\perp$. The right plot is the projection of these sets onto $\1^\perp$, coincidentally the same, by means of the projection $\P$ with rows $\p_1 = (2,-1,-1)/\sqrt{6}$ and $\p_2 = (0,1,-1)/\sqrt{2}$.}
\label{fig:complex}
\end{figure}

\begin{definition} \label{def:volumes}
Define
\begin{align} \label{eq:volumes}
\vol_\lock := | \P \Omega_\lock| \quad \text{and} \quad \vol_\stab := | \P \Omega_\stab|
\end{align}
to be the $(N-1)$-dimensional Lebesgue measures of the projections of $\Omega_\lock$ and $\Omega_\stab$ onto the mean zero hyperplane.
\end{definition}

The main goal of this paper is to derive upper and lower bounds for $\vol_\lock$ and $\vol_\stab$. In particular, we will define quantities $\vol_\up$, $\vol_\up'$, and $\vol_\low$ such that
\begin{align} \label{eq:inequality}
\vol_\low \le \vol_\stab \le \vol_\lock \le \min \{ \vol_\up, \vol_\up' \}.
\end{align}
In Theorem \ref{thm:main} we will obtain formulas for these quantities as sums over certain subgraphs of the directed network $\G$ which we now define.

\begin{definition} \label{def:T}
A connected subgraph $\T$ of $\G$ is called a directed tree if it has a vertex, called the root, which has out degree zero and if every other vertex in $\T$ has out degree one. We let $\TT$ denote the set of all directed trees in $\G$ and $\TT_i$ those with root $i$. If in addition $\T$ contains every vertex of $\G$, it is called a directed spanning tree. We let $\ST$ denote the set of all directed spanning trees of $\G$ and $\SS\TT_i$ those with root $i$. For any directed tree we define the weight
\begin{align*}
\gamma(\T) : = \prod_{e \in \E_{\T}} \gamma_e.
\end{align*}
\end{definition}

\begin{definition} \label{def:H}
Let $\Hh$ denote a subgraph of $\hat{G}$ of the form
\begin{align*}
\Hh = \T \sqcup \bigsqcup_{\R} \R \bowtie_r \T_r.
\end{align*}
This represents a graph which is the disjoint union of a directed tree $\T$ with graphs of the form $\R \bowtie_r \T_r$ which are directed trees $\T_r$ rooted and attached to $\R$ at $r$. Here $\R$ can denote one of two different types of graphs. First, it can denote a single directed cycle, where all edges point in the same direction around the cycle, in which case we define the weight
\begin{align*}
\gamma(\R) := \prod_{e \in \E_{\R}} \gamma_e.
\end{align*}
Second, it can denote a single directed cycle $\R_+$ along with the reverse directed cycle $\R_-$, same ``edges" but in the opposite direction, forming a double directed cycle in which case we define the weight
\begin{align*}
\gamma(\R) := |\gamma(\R_+) - \gamma(\R_-)|.
\end{align*}
(Note that we don't allow $\R$ to be a single directed cycle if it is a subgraph of a double directed cycle.) We let $\HH$ denote the set of all such graphs $\Hh$ and $\HH_i$ those for which $\T$ has root $i$. For any such graph we define the weight
\begin{align*}
\gamma(\Hh) = \gamma(\T) \prod_{\R} \gamma(\R) \prod_r \gamma(\T_r).
\end{align*}
\end{definition}

\begin{theorem} \label{thm:main}
For any directed graph $\hat{G}$ with a directed spanning tree,
\begin{align}
\vol_\up := \frac{2^{N-1}}{\sqrt{N}} \sum_{\Hh \in \HH} \gamma(\Hh) \quad \text{and} \quad \vol_\up' := \frac{(2\pi)^{N-1}}{\sqrt{N}} \sum_{\T \in \ST} \gamma(\T)
\end{align}
and
\begin{align}
\vol_\low := \frac{1}{\sqrt{N}} \sum_{\T \in \ST} \gamma(\T) \sum_{i=1}^N \prod_{j \ne i} I(\deg_j(\T))
\end{align}
satisfify \eqref{eq:inequality} where $I(x) := \frac{1}{2} \int_0^{\pi/2} (\cos \theta + \sin \theta)^x d\theta$ and $\deg_j(\T)$ is the total degree of the $j$th vertex in the directed spanning tree $\T$, namely, the sum of its in and out degrees.
\end{theorem}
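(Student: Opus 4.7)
The plan is to reduce all three bounds to integrals of the Jacobian determinant $\det(E^T\J E)$, where $E$ is an $N\times(N-1)$ matrix whose columns form an orthonormal basis of $\1^\perp$ and $\J=D\g$. Since $\g$ is invariant under $\bt\mapsto\bt+c\1$, the map $\tilde g(y):=E^T\g(Ey)$ is well-defined on $\RR^{N-1}$ and periodic under a lattice whose fundamental domain $F\subset\RR^{N-1}$ has measure $(2\pi)^{N-1}/\sqrt N$, accounting for the ubiquitous $1/\sqrt N$ factor. By the area formula, $\vol_\lock\le\int_F|\det D\tilde g|\,dy$. The key algebraic identity is
\[
\det(E^T\J E)=\sum_{i=1}^N\det(\J_{(i,i)})=\sum_{\T\in\ST}\gamma(\T)\prod_{(a,b)\in\T}\cos(\theta_a-\theta_b),
\]
where the first equality follows from the fact that the nonzero eigenvalues of $\J$ (which has $\1$ in its kernel) coincide with those of $E^T\J E$, combined with the Newton-type identity expressing the bottom coefficient of the characteristic polynomial as a sum of principal $(N-1)$-minors; the second equality is the directed Matrix-Tree Theorem applied to the weighted network with weight $\gamma_{ij}\cos(\theta_i-\theta_j)$ on the edge $i\to j$.

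The bound $\vol_\up'$ is then immediate: bound $|\cos|\le 1$ inside the $\ST$-sum and integrate over $F$ to get $\vol_\lock\le|F|\sum_\T\gamma(\T)=\vol_\up'$. The bound $\vol_\up$ is sharper and exploits the structure of $\HH$; here the plan is to apply Cauchy--Schwarz in the form $\int_F|\det|\le\sqrt{|F|\int_F\det^2\,dy}$ and evaluate $\int\det^2$ combinatorially. Expanding $\det^2$ as a double sum over ordered pairs $(\T_1,\T_2)$ of directed spanning trees and writing $\cos\phi=(e^{i\phi}+e^{-i\phi})/2$ converts each integral $\int_{[0,2\pi]^N}\prod\cos\,d\bt$ into a count of Eulerian orientations of the multigraph $\T_1\sqcup\T_2$. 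In such a multigraph the shared edges appear as digons while the symmetric difference $\T_1\triangle\T_2$ decomposes into edge-disjoint cycles, matching precisely the unicyclic (single- or double-cycle) components attached to trees in the definition of $\HH$; the double-cycle weight $|\gamma(\R_+)-\gamma(\R_-)|$ emerges from the signed combination of the two cyclic orientations once the square root is extracted.

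For the lower bound, the plan is to exhibit, for each $i\in\{1,\ldots,N\}$, a region $R_i\subset\{\theta_i=0\}$ contained in $\Theta_\stab$ on which $\P\g$ is injective and every cosine factor in the Matrix-Tree expansion is non-negative, then sum their images. Take $R_i:=\{\bt:\theta_i=0,\ \theta_j\in[0,\pi/2]\text{ for }j\ne i\}$; there $\theta_a-\theta_b\in[-\pi/2,\pi/2]$, so $\cos(\theta_a-\theta_b)\ge0$ and hence $\det(E^T\J E)\ge0$. Using that the orthogonal projection $\{\theta_i=0\}\to\1^\perp$ has Jacobian $1/\sqrt N$, one obtains
\[
|\P\g(R_i)|=\frac{1}{\sqrt N}\sum_{\T\in\ST}\gamma(\T)\int_{[0,\pi/2]^{N-1}}\prod_{(a,b)\in\T}\cos(\theta_a-\theta_b)\big|_{\theta_i=0}\prod_{j\ne i}d\theta_j,
\]
and the inner integral is evaluated by integrating up the tree from the leaves using the identity $\int_0^{\pi/2}\cos(\theta_v-\theta_u)(\cos\theta_u+\sin\theta_u)^k\,d\theta_u=(\cos\theta_v+\sin\theta_v)I(k+1)$, which propagates a $(\cos+\sin)$ factor across each edge and yields exactly $\prod_{j\ne i}I(\deg_j(\T))$ at the root. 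Showing that the images $\P\g(R_i)$ for distinct $i$ are disjoint---via a ``reference vertex'' argument identifying $R_i$ with the locus where a distinguished coordinate of $\P\g(\bt)$ is extremal---then yields $\vol_\stab\ge\vol_\low$ upon summing over $i$.

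The main obstacles will be, first, the combinatorial identification in the $\vol_\up$ argument of the subgraphs in $\HH$, and in particular the production of the double-cycle weight $|\gamma(\R_+)-\gamma(\R_-)|$ from the Eulerian-orientation expansion of $\int\det^2$, which requires careful tracking of sign cancellations between orientations of a common cycle; and second, the verification in the lower bound of both the injectivity of $\P\g|_{R_i}$ and the disjointness of the images $\P\g(R_i)$, neither of which follows automatically from positivity of the cosines in the asymmetric setting where $\J$ is no longer symmetric on $\1^\perp$.
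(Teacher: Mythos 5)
Your treatments of $\vol_\up'$ and $\vol_\low$ are essentially the paper's: the identity $\det(\P\J\P^\top)=\sum_{\T\in\ST}\gamma(\T)\prod_{e\in\E_\T}\cos\theta_e$ (which the paper gets via Cauchy--Binet and the Matrix Tree Theorem, and you get via the characteristic-polynomial/principal-minors argument -- both are fine), the bound $|\cos|\le 1$ over a fundamental domain for $\vol_\up'$, and the decomposition of $\Theta_\low$ into the slices $\theta_i=0$ with the leaf-by-leaf $(\cos+\sin)$ propagation for $\vol_\low$. The injectivity and disjointness issues you flag in the lower bound are real but resolvable exactly as you suggest (positive definiteness of $\tilde{\J}$ on each convex slice gives injectivity by the standard monotonicity argument), and the paper treats them at the same level of detail.

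The genuine gap is the $\vol_\up$ bound. Your plan -- Cauchy--Schwarz in the form $\int_F|\det|\le\sqrt{|F|\int_F\det^2}$ followed by an Eulerian-orientation evaluation of $\int\det^2$ -- cannot produce the stated inequality. The output of that computation is the square root of a sum over ordered pairs of spanning trees; it is not a linear sum over $\HH$, and the step ``the double-cycle weight emerges once the square root is extracted'' is not a valid operation, since $\sqrt{\sum_\Hh a_\Hh}\ne\sum_\Hh\sqrt{a_\Hh}$. Nor is there any reason the Cauchy--Schwarz quantity is bounded by $\frac{2^{N-1}}{\sqrt N}\sum_{\Hh\in\HH}\gamma(\Hh)$; in particular the factor $2^{N-1}$ (rather than a power of $2\pi$) has no origin in an integral over the torus. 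The missing idea is that this bound does not go through the Jacobian integral at all: from $\g(\bt)=\BB\sin(\B^\top\bt)$ and $\sin x\in[-1,1]$ one gets the containment $\Omega_\lock\subseteq\BB[-1,1]^{|E|}$, a zonotope, whose projected volume is computed exactly by Shephard's formula as $2^{N-1}\sum_{|S|=N-1}|\det((\P\BB)_S)|$. Cauchy--Binet then reduces everything to the maximal minors $\det(\BB_{i,S})$ of the weighted incidence matrix, and the combinatorial content (the paper's Lemma \ref{lem:H}) is that these minors vanish unless $S$ supports a graph $\Hh\in\HH_i$, in which case $|\det(\BB_{i,S})|=\gamma(\Hh)$ with signs alternating in $i$; the weight $|\gamma(\R_+)-\gamma(\R_-)|$ arises from expanding the determinant of the cycle block of $\BB$, not from sign cancellations in a squared Matrix-Tree sum. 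Without the zonotope containment and this minor computation, the $\vol_\up$ inequality is not established.
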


\begin{remark}
Theorem \ref{thm:main} generalizes to lower bounds for the volumes $\vol_\lock := | \P_\p \Omega_\lock|$ and $\vol_\stab := | \P_\p \Omega_\stab|$ where $\P_\p$ is an orthonormal projection onto the hyperplane with normal vector $\p$ all of whose components have the same sign.
\end{remark}

In Section \ref{sec:notation} we introduce notation and preliminary results which are used in the proof of Theorem \ref{thm:main} in Section \ref{sec:bounds}. In Section \ref{sec:comp} we compare our results with those obtained by Bronski and the author \cite{doi:10.1137/16M110335X} for undirected networks. Finally, in Section \ref{sec:motifs} we give a numerical example of how the complexity of a network, the number of spanning trees, depends in the statistical properties of certain two edge subgraphs called motifs. This is motivated by the work of Nykamp, Zhao, etal. \cite{10.3389/fncom.2011.00028,Zhao.2012} who studied the dependence of synchronization on the statistical properties of these motifs.



\section{Notation and Preliminaries} \label{sec:notation}

In this section we state all preliminary results as well as establish notation. We start with the underlying network.

\begin{definition} \label{def:directed}
Let $\G = (V,\E,\Gamma)$ be a weighted directed graph with vertex set $V = \{1,\dots,N\}$, edge set $\E \subseteq \{ (i,j) : i \ne j \in V\}$, and positive edge weights $\Gamma = \{ \gamma_e \}_{e \in E}$. The edge $e=(i,j)$ represents a directed edge from $i$ to $j$, and in this case, we set $\gamma_{ij} = \gamma_e$.
\end{definition}

\begin{definition} \label{def:undirected}
Given a graph $\G$ as in Definition \ref{def:directed}, define the undirected graph $G = (V,E)$ with vertex set $V = \V$ and edge set
\begin{align*}
E  = \{ \{i,j \} : (i,j) \in \E \text{ or } (j,i) \in \E \}.
\end{align*}
In general, given a subgraph $\Hh$ of $\G$ define the subgraph $H$ of $G$ by $V_H = \V_\Hh$ and
\begin{align*}
E_H  = \{ \{i,j \} : (i,j) \in \E_\Hh \text{ or } (j,i) \in \E_\Hh \}.
\end{align*}
Furthermore we fix a labeling and orientation of the edges of $G$.
\end{definition}

We give an example of such a $\G$ and its corresponding $G$ in Figure \ref{fig:graphs}.

\begin{figure}[H]
\captionsetup{font=scriptsize}
\centering
\begin{subfigure}{.45\textwidth}
  \centering
  \includegraphics[width=.8\linewidth]{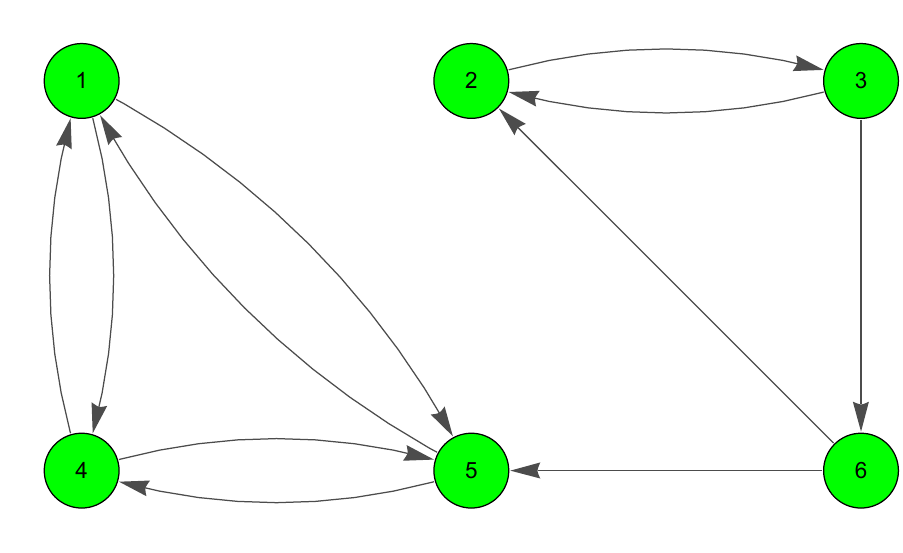}
\end{subfigure}
\begin{subfigure}{.45\textwidth}
  \centering
  \includegraphics[width=0.8\linewidth]{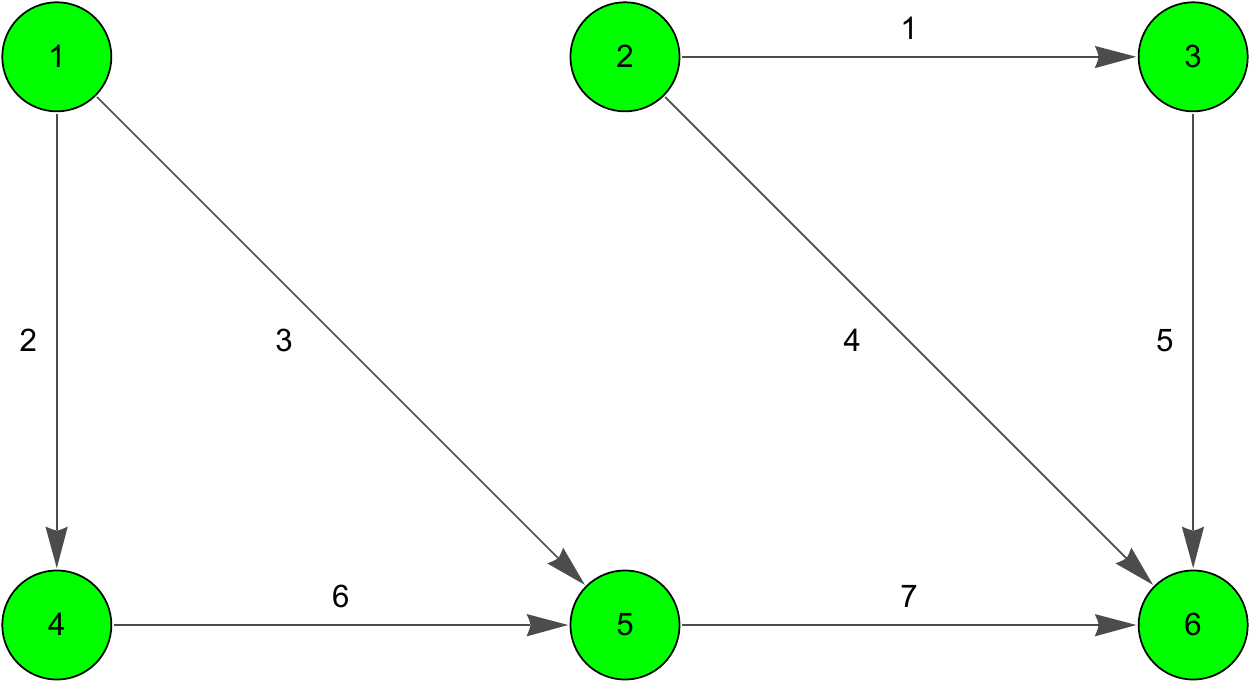}
\end{subfigure}
\caption{A graph $\G$ and its corresponding $G$ with labeled edges and orientation.}
\label{fig:graphs}
\end{figure}

\begin{definition} \label{def:incidence}
Given graphs $\G$ and $G$ as in Definitions \ref{def:directed} and \ref{def:undirected}, define the two $|V| \times |E|$ incidence matrices $\BB$ and $\B$ by
\begin{align*}
\BB_{ie} =
\begin{cases}
\gamma_{ij} & \mbox{if $e = (j,i)$,}
\\
-\gamma_{ij} & \mbox{if $e = (i,j)$,}
\\
0 & \mbox{otherwise,}
\end{cases}
\quad \text{and} \quad
\B_{ie} =
\begin{cases}
1 & \mbox{if $e = (*,i)$,}
\\
-1 & \mbox{if $e=(i,*)$,}
\\
0 & \mbox{otherwise.}
\end{cases}
\end{align*}
\end{definition}
Note that in the symmetric case, $\gamma_{ij} = \gamma_{ji}$ that $\BB = \B \D$ where $\D$ is the diagonal matrix with edge weights along the diagonal. For the graphs $G$ and $\G$ in Figure \ref{fig:graphs} we get that
\begin{align*}
\BB =
\begin{pmatrix}
0 & -\gamma_{14} & -\gamma_{15} & 0 & 0 & 0 & 0
\\
-\gamma_{23} & 0 & 0 & 0 & 0 & 0 & 0
\\
\gamma_{32} & 0 & 0 & 0 & -\gamma_{36} & 0 & 0
\\
0 &  \gamma_{41} & 0 & 0 & 0 & -\gamma_{45} & 0
\\
0 & 0 & \gamma_{51} & 0 & 0 & \gamma_{54} & 0
\\
0 & 0 & 0 & \gamma_{62} & 0 & 0 & \gamma_{65}
\end{pmatrix}
\quad \text{and} \quad
\B =
\begin{pmatrix}
0 & -1 & -1 & 0 & 0 & 0 & 0
\\
-1 & 0 & 0 & -1 & 0 & 0 & 0
\\
1 & 0 & 0 & 0 & -1 & 0 & 0
\\
0 & 1 & 0 & 0 & 0 & -1 & 0
\\
0 & 0 & 1 & 0 & 0 & 1 & -1
\\
0 & 0 & 0 & 1 & 1 & 0 & 1
\end{pmatrix}.
\end{align*}
Note that these matrices depend on the orientation and labeling of the edges and vertices of $G$ which is arbitrary. However, we always have that
\begin{align} \label{eq:identities}
\g(\bt) = \BB \sin(\B^\top \bt) \quad \text{and} \quad \J(\bt) = \BB \D_\bt \B^\top
\end{align}
where $\D_\bt$ denotes the $|E| \times |E|$ diagonal matrix with diagonal entries given by the vector $\cos(\B^\top \bt)$ and where both sine and cosine act on a vector component wise.

At this point we briefly outline our main result as a means to motivate the remaining definitions and lemmas.

\begin{proposition} \label{prop:bounds}
Define the sets
\begin{align}
\Omega_\up := \BB [-1,1]^{|E|} \quad \text{and} \quad \Omega_\low := \P \g(\Theta_\low)
\end{align}
where
\begin{align}
\Theta_\low := \{ \bt \in [0,2\pi)^N : |\theta_i - \theta_j| < \pi/2 \text{ for all } i,j \in V \}.
\end{align}
If $\hat{G}$ has at least one directed spanning tree, then the volumes satisfying
\begin{align}
\vol_\up = |\P \Omega_\up| \quad \text{and} \quad \vol_\up' \ge \int_{\P_1 [0,2\pi)^{N-1}} |\det( \P \J(\P^\top \bp)\P^\top)| d\bp
\end{align}
and
\begin{align}
\vol_\low = |\P \Omega_\low| = \int_{\P \Theta_\low} |\det( \P \J(\P^\top \bp)\P^\top)| d\bp
\end{align}
satisfy \eqref{eq:inequality}. (Here $\P_i$ denotes $\P$ without the $i$th column.)
\end{proposition}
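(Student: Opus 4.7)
The plan is to prove each link of the chain $\vol_\low \le \vol_\stab \le \vol_\lock \le \min\{\vol_\up, \vol_\up'\}$ separately. The middle inequality $\vol_\stab \le \vol_\lock$ is immediate from the set inclusion $\Omega_\stab \subseteq \Omega_\lock$ recorded in Definition \ref{def:sets}. For the upper bound by $\vol_\up$, the identity $\g(\bt) = \BB \sin(\B^\top \bt)$ from \eqref{eq:identities} together with $\sin \in [-1,1]$ gives $\Omega_\lock \subseteq \BB[-1,1]^{|E|} = \Omega_\up$, so $\vol_\lock \le |\P\Omega_\up| = \vol_\up$. For the bound by $\vol_\up'$, I would use that $\P \circ \g$ is invariant under $\bt \mapsto \bt + c\1$ and $2\pi$-periodic in each $\theta_i$, so $\P \Omega_\lock$ coincides with the image of the compact fundamental domain $\{\P^\top \bp : \bp \in \P_1[0,2\pi)^{N-1}\}$ under $\P \circ \g$. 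The area formula then yields
\[
|\P\Omega_\lock| \le \int_{\P_1[0,2\pi)^{N-1}} \bigl|\det(\P \J(\P^\top \bp) \P^\top)\bigr| \, d\bp,
\]
and this integral is bounded above by $\vol_\up'$ through the combinatorial expansion of the determinant proved in Theorem \ref{thm:main}.

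The lower bound is the substantive step. I would split it into two claims: (a) $\Theta_\low \subseteq \Theta_\stab$, giving $\Omega_\low = \P \g(\Theta_\low) \subseteq \P \Omega_\stab$ and hence $|\P\Omega_\low| \le \vol_\stab$; and (b) the map $\bp \mapsto \P \g(\P^\top \bp)$ is injective on $\P \Theta_\low$, so that change of variables gives the claimed integral identity. For (a), on $\Theta_\low$ the factor $\cos(\theta_i - \theta_j) > 0$ for every pair $i \ne j$, so the diagonal matrix $\D_\bt$ from \eqref{eq:identities} is strictly positive; a directed Matrix-Tree argument invoking the spanning tree hypothesis then promotes this to the strict positivity $\vv^\top \J(\bt) \vv > 0$ on $\1^\perp \setminus \{\0\}$. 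For (b), $\Theta_\low$ is convex and invariant under $\bt \mapsto \bt + c\1$, so $\P \Theta_\low$ is convex; if two distinct $\bp_1, \bp_2 \in \P \Theta_\low$ shared the same image, setting $\vv = \P^\top(\bp_2 - \bp_1) \in \1^\perp$ and integrating $\P \g$ along the line segment from $\P^\top \bp_1$ to $\P^\top \bp_2$ (which stays in $\Theta_\low$ by convexity) would force $\int_0^1 \vv^\top \J(\bt_s) \vv \, ds = 0$, contradicting (a).

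The principal obstacle is establishing the positive definiteness required in (a). In the symmetric case one has $\BB = \B \D$ with $\D$ diagonal and positive, so $\vv^\top \J \vv = (\B^\top \vv)^\top \D \D_\bt (\B^\top \vv)$ is a manifest sum of non-negative squares, and non-degeneracy of $\B^\top$ on $\1^\perp$ (from connectivity) upgrades this to strict positivity. No such factorization exists in the asymmetric setting, so the argument must instead extract positivity from the directed Matrix-Tree theorem, expressing the principal minor of $\J(\bt)$ at vertex $i$ as a sum of terms $\gamma(\T) \prod_{e \in \E_\T} \cos((\B^\top \bt)_e) > 0$ over directed spanning trees $\T$ rooted at $i$, and then leveraging the cohesion bound together with a continuity/perturbation argument to upgrade non-degeneracy of $\J$ restricted to $\1^\perp$ into strict positivity of the associated quadratic form.
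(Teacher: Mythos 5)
Your overall architecture matches the paper's: the chain of containments $\Omega_\low \subseteq \Omega_\stab \subseteq \Omega_\lock \subseteq \Omega_\up$, followed by a change of variables from frequencies to angles using the invariance of $\P\g$ under $\bt \mapsto \bt + c\1$. The gap is in your step (a). You ask for the quadratic-form statement $\vv^\top \J(\bt)\vv > 0$ for all $\vv \in \1^\perp\setminus\{\0\}$ and propose to obtain it by ``upgrading'' the non-degeneracy supplied by the directed Matrix-Tree theorem via a continuity/perturbation argument. That statement is false for asymmetric coupling, so no such upgrade exists. Take $N=3$ with the only nonzero weights $\gamma_{12}=a$ and $\gamma_{23}=b$, and $\bt=\0$, so every cosine equals $1$ and $\G$ has a directed spanning tree rooted at vertex $3$. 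For $\vv=(1,t,-1-t)\in\1^\perp$ one computes $\vv^\top\J(\0)\vv = a+(b-a)t+2bt^2$, whose minimum over $t$ is $a-(a-b)^2/(8b)$, which is negative for $a=10$, $b=1$. So the symmetric part of $\J$ restricted to $\1^\perp$ can be indefinite even though $\det(\P\J\P^\top)=ab\ne 0$; the determinant of $\P\J\P^\top$ remaining nonzero does not prevent the symmetric part from passing through singularity, which is exactly what a continuity argument would need to exclude.

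The paper's proof avoids the quadratic form entirely. For $\bt\in\Theta_\low$ every Gershgorin disc of $\J(\bt)$ is centered at its positive diagonal entry with radius equal to that entry, so the spectrum lies in the closed right half-plane and can meet the imaginary axis only at $0$; the directed Matrix-Tree theorem applied to the weights $\gamma_e\cos\theta_e>0$ gives $\det(\P\J(\bt)\P^\top)\ne 0$, so the kernel is exactly $\mathrm{span}\{\1\}$ and all other eigenvalues have strictly positive real part. That spectral statement is what $\Theta_\low\subseteq\Theta_\stab$ actually requires, and it is all the paper uses. A consequence is that your injectivity argument in (b), which integrates the quadratic form along a segment, is also unavailable as written; the paper asserts the change-of-variables equality $|\P\g(\Theta_\low)|=\int_{\P\Theta_\low}|\det(\P\J(\P^\top\bp)\P^\top)|\,d\bp$ directly from positivity of $\tilde{\J}$, so if you wish to make the implicit injectivity explicit you will need an argument that does not route through definiteness of the symmetric part.
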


\begin{proof}
We first establish the inequalities
\begin{align*}
\vol_\low \le \vol_\stab \le \vol_\lock \le \vol_\low
\end{align*}
by proving the set containments
\begin{align*}
\Omega_\low \subseteq \Omega_\stab \subseteq \Omega_\lock \subseteq \Omega_\up.
\end{align*}
The set containment $\Omega_\stab \subseteq \Omega_\lock$ follows from Definition \ref{def:sets} while $\Omega_\lock \subseteq \Omega_\lock$ follows from \eqref{eq:identities} and the observation that $\sin x \in [-1,1]$. Therefore it remains to verify the contianment $\Omega_\low \subseteq \Omega_\stab$ which will follow from $\Theta_\low \subseteq \Theta_\low$. In other words we need to show that $\J(\bt) >  0$ on $\1^\perp$ for $\bt \in \Theta_\low$. To see this note that $\cos(\theta_i - \theta_j) > 0$ for $\bt \in \Theta_\low$ so that the Gershgorin Circle Theorem implies that the real parts of all eigenvalues of $\J(\bt)$ are non-negative. Furthermore, we have that the only possible eigenvalue with zero real part is zero. Therefore it remains to show that $\1$ spans the kernel of $\J(\bt)$. This however is equivalent to showing that the matrix $\J(\bt) \P^\top$ has a trivial kernel which holds if the matrix $\P \J(\bt) \P^\top$ is invertible. This is the case by the Matrix Tree Theorem (Theorem \ref{thm:matrixtree}) since $\hat{G}$ has at least one directed spanning tree.

Next we verify the formula for $\vol_\low$ and extend it to show that $\vol_\lock \le \vol_\up'$. The basic idea is to replace an integral over ``frequencies" with an integral over ``angles". To do this define the function $\tilde{\g}(\P\bt) = \P \g(\bt)$, then its Jacobian $\tilde{\J}(\P \bt) = \P \J(\bt) \P^\top$. By the previous paragraph we know that $\tilde{\J}(\bp) > 0$ for $\bp \in \P \Theta_\low$. Therefore
\begin{align*}
\vol_\low = |\tilde{\g}(\P \Theta_\low)| = \int_{\P \Theta_\low} |\det(\tilde{\J}(\bp))| d\bp = \int_{\P \Theta_\low} |\det(\P \J(\P^\top \bp) \P^\top)| d\bp.
\end{align*}
The inequality for $\vol_\up'$ follows by similar reasoning from $\tilde{\g}(\P_1 [0,2\pi)^{N-1}) = \P \g([0,2\pi)^N)$. This of course simply follows by translating $\bt$ by a multiple of $\1$ so that $\theta_1 = 0$.
\end{proof}

\begin{remark}
Geometrically, $\vol_\up$ and $\vol_\low$ are the $(N-1)$-dimensional Lebesgue measures of the projections of $\Omega_\up$ and $\Omega_\low$ onto the mean zero hyperplane. Unfortunately, $\vol_\up'$ doesn't seem to have such a geometric interpretation.
\end{remark}

In order to obtain our formulas for $\vol_\up'$ and $\vol_\low$ we will need the well known Matrix Tree Theorem (Theorem \ref{thm:matrixtree}) to evaluate the integrals in Proposition \ref{prop:bounds}. Furthermore, by a theorem of Shephard in \cite{MR0362054} we know that $\vol_\up = |\P \BB [-1,1]^{|E|}|$ can be computed in terms of the sub determinants of $\P \BB$ hence the sub determinants of $\BB$. Therefore we characterize the sub determinants of $\BB$ in Lemma \ref{lem:H}.

\begin{theorem}[Matrix Tree Theorem] \label{thm:matrixtree}
For any directed graph $\G$, define the $|V| \times |V|$ graph Laplacian matrix $\L$ by
\begin{align*}
L_{ij} =
\begin{cases}
\sum_{k \ne i} \gamma_{ik} & \mbox{if $i=j$,}
\\
-\gamma_{ij} & \mbox{if $i \ne j$.}
\end{cases}
\end{align*}
Then the minors of $\L$ are sums over directed spanning trees, namely,
\begin{align*}
(-1)^{i+j} \det(\L_{ij}) = \sum_{\T \in \SS\TT_i} \gamma(\T).
\end{align*}
\end{theorem}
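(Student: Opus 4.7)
The plan is to reduce the general cofactor identity to the diagonal case $j = i$ and then establish that case via a Cauchy--Binet expansion.

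For the reduction, observe that $\L \1 = \0$, because each row of $\L$ sums to zero by construction. Deleting row $i$ from $\L$ therefore leaves an $(N-1) \times N$ matrix whose columns sum to $\0$, so its $i$th column equals the negative of the sum of the remaining columns. When $j \ne i$, expand $\det(\L_{ij})$ by multilinearity after substituting this identity into the surviving column $i$: each resulting determinant contains a repeated column except the one in which column $i$ is replaced by column $j$, which equals $\det(\L_{ii})$ up to the sign of the necessary rearrangement. Tracking that sign yields $(-1)^{i+j}\det(\L_{ij}) = \det(\L_{ii})$, reducing the theorem to the identity $\det(\L_{ii}) = \sum_{\T \in \SS\TT_i} \gamma(\T)$.

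For the diagonal identity, factor $\L = M \D Q^\top$, where $\D$ is the $|\E| \times |\E|$ diagonal matrix with $\D_{ee} = \gamma_e$ and the $N \times |\E|$ matrices $M$ and $Q$ have entries
\[ M_{k,e} = \begin{cases} 1 & e = (k,*), \\ 0 & \text{otherwise}, \end{cases} \qquad Q_{k,e} = \begin{cases} 1 & e = (k,*), \\ -1 & e = (*,k), \\ 0 & \text{otherwise}. \end{cases} \]
A direct computation confirms that this recovers the Laplacian from the theorem. Writing $M_{\sim i}$ and $Q_{\sim i}$ for these matrices with row $i$ deleted, we have $\L_{ii} = M_{\sim i} \D Q_{\sim i}^\top$, and Cauchy--Binet gives
\[ \det(\L_{ii}) = \sum_{\substack{S \subseteq \E \\ |S| = N - 1}} \Big( \prod_{e \in S} \gamma_e \Big) \det\big(M_{\sim i}[S]\big) \det\big(Q_{\sim i}[S]\big), \]
where $[S]$ denotes restriction to the columns indexed by $S$.

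It remains to identify the surviving terms. Each column of $M_{\sim i}[S]$ has at most one nonzero entry (at the tail of the corresponding edge), so $\det(M_{\sim i}[S]) \ne 0$ precisely when every vertex of $V \setminus \{i\}$ is the tail of exactly one edge of $S$; equivalently, $S$ encodes a function $f \colon V \setminus \{i\} \to V$ with $f(k) \ne k$, and then $|\det(M_{\sim i}[S])| = 1$. Given this, $Q_{\sim i}[S]$ is the signed incidence matrix of the chosen edges restricted to $V \setminus \{i\}$. A directed cycle of $f$ contained in $V \setminus \{i\}$ would produce a linear dependence among the corresponding columns and force $\det(Q_{\sim i}[S]) = 0$, whereas an acyclic $f$ is exactly a directed spanning tree rooted at $i$. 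Ordering the vertices of $V \setminus \{i\}$ so that every non-root precedes its $f$-image simultaneously triangularizes both factors with $+1$s on the diagonal, so their product is $+1$ and the Cauchy--Binet sum collapses to $\sum_{\T \in \SS\TT_i} \gamma(\T)$.

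The step I expect to be the main obstacle is this final sign computation --- verifying that $\det(M_{\sim i}[S]) \det(Q_{\sim i}[S]) = +1$ rather than $-1$ on every spanning in-arborescence $S$. The leaf-stripping order described above aligns both factors at once, but it is the one place where the combinatorial and linear-algebraic data must be reconciled with care.
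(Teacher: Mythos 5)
Your proof is correct, but there is nothing in the paper to compare it against: Theorem \ref{thm:matrixtree} is invoked as ``the well known Matrix Tree Theorem'' and the paper supplies no proof of it (the appendix only proves Lemma \ref{lem:H}). What you have written is a complete, self-contained proof of the weighted directed version. Both halves check out. The reduction to the diagonal cofactor is sound: since the rows of $\L$ sum to zero, the surviving column $i$ of the row-deleted matrix is minus the sum of the others, multilinearity kills every term with a repeated column, and the $|i-j|-1$ transpositions needed to slide column $j$ into position $j$ give exactly $(-1)^{i+j}\det(\L_{ij}) = \det(\L_{ii})$; this argument has the virtue of not assuming the kernel of $\L$ is one-dimensional, which the usual adjugate argument would require. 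The factorization $\L = M\D Q^\top$ does reproduce the paper's Laplacian (the diagonal collects the out-weights of $i$, the off-diagonal entry picks up $-\gamma_{ij}$ from the head indicator in $Q$), and your Cauchy--Binet analysis correctly identifies the surviving column sets as the functional digraphs $f\colon V\setminus\{i\}\to V$ with no cycle avoiding $i$, i.e.\ exactly the spanning trees in $\SS\TT_i$ in the paper's convention (root has out-degree zero, all other vertices out-degree one). The sign step you flagged as the main risk is fine: a topological order in which each $k$ precedes $f(k)$ makes $M_{\sim i}[S]$ the identity and $Q_{\sim i}[S]$ lower unitriangular simultaneously, so the product of the two determinants is $+1$ and no cancellation occurs. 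Your approach is also consonant in spirit with how the paper handles the related Lemma \ref{lem:H} (leaf-stripping expansions of incidence-matrix minors), so it would splice naturally into the appendix if the author wished to make the paper self-contained.
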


\begin{lemma} \label{lem:H}
Let $S$ be a subset of $\E$ with $N-1$ edges. Then,
\begin{align*}
|\det(\BB_{i,S})| =
\begin{cases}
\gamma(\Hh) & \mbox{if there exists an $\Hh \in \HH_i$ such that $E_H = S$},
\\
0 & \mbox{otherwise}.
\end{cases}
\end{align*}
Furthermore, the sign of $\det(\BB_{i,S})$ alternate in $i$.
\end{lemma}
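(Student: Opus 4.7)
The plan is to expand $\det(\BB_{i, S})$ by the Leibniz formula as a signed sum over bijections $\sigma : V \setminus \{i\} \to S$ with each $j \ne i$ incident to $\sigma(j)$. For a non-vanishing term, at every vertex $j \ne i$ one reads off an ``effective out-edge'' $(j, m) \in \E$ from $j$ to the other endpoint $m$ of $\sigma(j)$: the corresponding matrix entry is $-\gamma_{jm}$ when $j$ is the tail of $\sigma(j)$ and $+\gamma_{jm}$ when $j$ is the head, and in either case it is nonzero exactly when $(j, m)$ lies in $\E$. Gathering these effective out-edges for all $j \ne i$ produces a subgraph $\Hh_\sigma$ of $\G$ whose underlying undirected graph equals that of $S$, in which $i$ has out-degree zero and every other vertex has out-degree one.

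This out-degree pattern forces exactly the structure of Definition \ref{def:H}: the component of $\Hh_\sigma$ containing $i$ must be a directed tree rooted at $i$, while every other component, having every vertex of out-degree one, must consist of a single directed cycle with directed trees hanging off it. Hence the nonzero matchings correspond to choices of $\Hh \in \HH_i$ with $E_H = S$, except that each cycle component admits up to two consistent matchings --- a clockwise and a counter-clockwise traversal around the cycle --- each valid only when the corresponding orientation lies in $\E$. If no such $\Hh$ exists, no matching contributes and the determinant is zero.

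To compute the magnitude I treat each component separately. The main tree and the trees hanging off each cycle admit unique matchings producing, up to an overall sign, the product $\gamma(\T) \prod_r \gamma(\T_r)$. For a cycle of length $k$ the clockwise matching contributes $\pm \gamma(\R_+)$; when the reverse direction also lies in $\E$, the counter-clockwise matching contributes $\pm \gamma(\R_-)$, and the two bijections differ by the cyclic shift of length $k$ (permutation sign $(-1)^{k-1}$) while the pattern of head/tail entries flips between them. These two effects combine so that the two contributions enter as $\pm(\gamma(\R_+) - \gamma(\R_-))$, whose absolute value matches $\gamma(\R)$ in the double case, while in the single case we get $\pm \gamma(\R_+) = \pm \gamma(\R)$ directly. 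Multiplying the contributions of all components yields $|\det(\BB_{i, S})| = \gamma(\Hh)$.

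For the sign alternation in $i$ I would find a vector $y \in \RR^N$ in the left null space of $\BB_S$ whose components all share a common sign. The constraint $y^\top \BB_{\cdot, e} = 0$ for $e = (a,b) \in S$ reduces to $y_b \gamma_{ba} = y_a \gamma_{ab}$, which ties $y_a$ and $y_b$ to a common sign when both weights are positive and forces the corresponding $y$-entries to vanish when a reverse direction is absent (consistent with $\det(\BB_{a, S}) = 0$ in those cases). Propagating these relations through the underlying graph of $S$ produces such a $y$, and the standard identity $y_i = c\,(-1)^{i+1}\det(\BB_{i,S})$ then yields the alternation. The most delicate point is the sign bookkeeping inside a cycle component --- verifying that the cyclic-shift permutation sign and the pattern of head/tail entries conspire to give the oriented \emph{difference} $\gamma(\R_+) - \gamma(\R_-)$ rather than a sum.
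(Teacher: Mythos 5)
Your proposal is correct, and the two halves of it relate to the paper's proof in different ways. For the structural part your Leibniz expansion over bijections $\sigma : V \setminus \{i\} \to S$ is the same combinatorial content as the paper's argument, just packaged differently: the paper iterates a Laplace expansion along columns of leaf edges to peel off the trees $\T$ and $\T_r$, and then evaluates the remaining cycle block as an explicit $\ell \times \ell$ determinant whose expansion along the last column gives $\pm(\gamma(\R_+) - \gamma(\R_-))$; you instead identify the nonvanishing Leibniz terms with out-degree-one subgraphs and obtain the same difference by comparing the cyclic-shift permutation sign $(-1)^{k-1}$ with the $k$ head/tail sign flips. Your bookkeeping there checks out (the net ratio of the two cycle terms is $(-1)^{k-1}(-1)^k = -1$), and it reproduces the paper's single/double cycle dichotomy. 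For the sign alternation, however, you take a genuinely different and arguably cleaner route: the paper reduces to the case that $\G$ is a directed tree and invokes the Perron--Frobenius theorem on a shift of $-\J(\0)$ to produce a positive left null vector of $\BB$, then performs row substitutions; you instead read the left null space of the submatrix $\BB_S$ directly off the column relations $y_b\gamma_{ba} = y_a\gamma_{ab}$ and use the standard fact that the signed maximal minors $(-1)^{i+1}\det(\BB_{i,S})$ span that null space when it is one-dimensional. This avoids both Perron--Frobenius and the paper's somewhat underexplained reduction to a directed tree, and it handles the unicyclic components of $S$ uniformly. Two details you should make explicit when writing this up: first, the proportionality $y_i = c(-1)^{i+1}\det(\BB_{i,S})$ requires $\BB_S$ to have rank $N-1$, which you may assume since otherwise every minor vanishes and the alternation is vacuous; second, propagating the relations around a cycle component imposes the consistency condition $y_r(\gamma(\R_+) - \gamma(\R_-)) = 0$, so generically $y$ vanishes on unicyclic components --- which is exactly consistent with the vanishing of the corresponding minors, but needs to be said so that the ``propagation'' is seen to terminate in a well-defined, weakly same-signed vector.
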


We defer the proof to the appendix but demonstrate the lemma using the subgraphs in Figure \ref{fig:subgraphs}. If we choose $S = \{e_1,e_2,e_4,e_5,e_6\}$, then the left subgraph $\Hh$ in Figure \ref{fig:subgraphs} is an element of $\HH_1$ with $E_H = S$. By inspecting $\G$ in Figure \ref{fig:graphs} we also see that there exists elements of $\HH_4$ and $\HH_5$ satisfying $E_H = S$ but not for $\HH_2$, $\HH_3$, or $\HH_6$. Similarly, if we choose $S' = \{e_1,e_2,e_3,e_6,e_7\}$, then the right subgraph $\Hh$ in Figure \ref{fig:subgraphs} is an element of $\HH_3$ with $E_H = S'$. Again, we find that there exists an element of $\HH_2$ satisfying $E_H = S'$ but not for $\HH_1$, $\HH_4$, $\HH_5$, or $\HH_6$. This can be seen in the computation below.
\begin{align*}
\begin{split}
\det(\BB_{1,S}) &= -\gamma_{54} \gamma_{41} \gamma_{23} \gamma_{36} \gamma_{62}
\\
\det(\BB_{2,S}) &= 0
\\
\det(\BB_{3,S}) &= 0
\\
\det(\BB_{4,S}) &= \gamma_{54} \gamma_{14} \gamma_{23} \gamma_{36} \gamma_{62}
\\
\det(\BB_{5,S}) &= -\gamma_{14} \gamma_{45} \gamma_{23} \gamma_{36} \gamma_{62}
\\
\det(\BB_{6,S}) &= 0
\end{split}
\begin{split}
\det(\BB_{1,S'}) &= 0
\\
\det(\BB_{2,S'}) &= \gamma_{32}(\gamma_{14} \gamma_{45} \gamma_{51}-\gamma_{15} \gamma_{54} \gamma_{41}) \gamma_{65}
\\
\det(\BB_{3,S'}) &= -\gamma_{23}(\gamma_{14} \gamma_{45} \gamma_{51}-\gamma_{15} \gamma_{54} \gamma_{41}) \gamma_{65}
\\
\det(\BB_{4,S'}) &= 0
\\
\det(\BB_{5,S'}) &= 0
\\
\det(\BB_{6,S'}) &= 0
\end{split}
\end{align*}
Further notice how the sign pattern alternates in $i$.

\begin{figure}[H]
\captionsetup{font=scriptsize}
\centering
\begin{subfigure}{.45\textwidth}
  \centering
  \includegraphics[width=.8\linewidth]{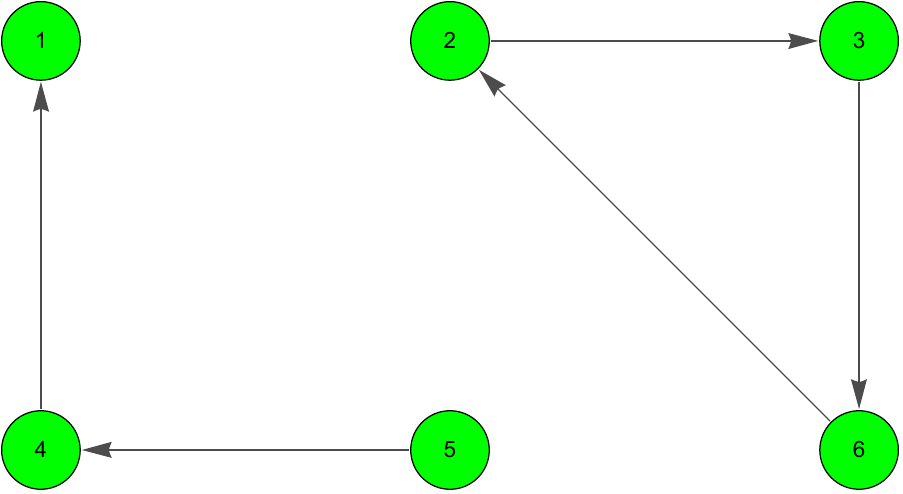}
\end{subfigure}
\begin{subfigure}{.45\textwidth}
  \centering
  \includegraphics[width=0.8\linewidth]{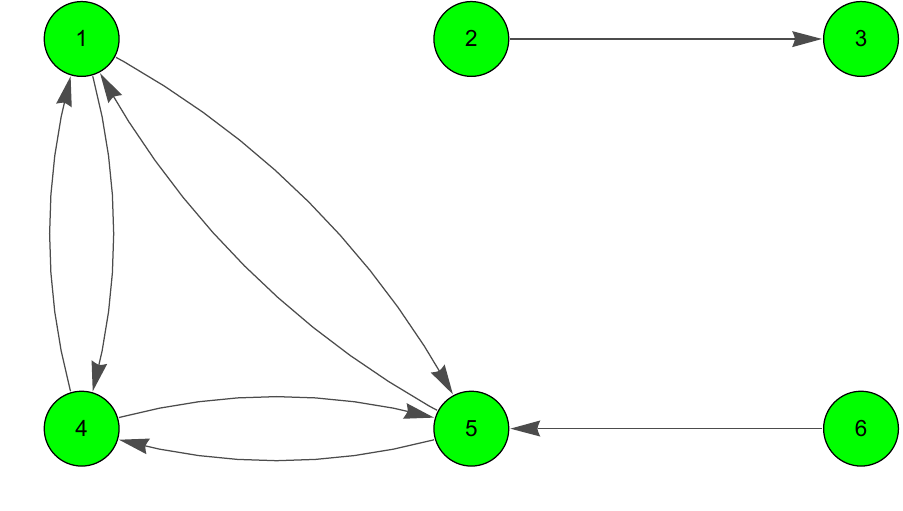}
\end{subfigure}
\caption{Example of subgraphs in $\HH_1$ and $\HH_3$ respectively for the graphs in Figure \ref{fig:graphs}.}
\label{fig:subgraphs}
\end{figure}

We end this section by observing that in the symmetric case, $\gamma_{ij} = \gamma_{ji}$, that $\gamma(\Hh) = 0$ unless $\Hh \in \ST$. To see this it suffices to observe that $\gamma(\R) = 0$ for any ring $\R$ as in Definition \ref{def:H}. Therefore $\Hh$ must be a directed tree and in fact a directed spanning tree. In this paper we will express our upper bound as sum over $\HH$ and this observation explains why our similarly defined upper bound in \cite{doi:10.1137/16M110335X} is a sum over spanning trees.




\section{Proof of Main Theorem} \label{sec:bounds}

In this section we prove Theorem \ref{thm:main}.


\subsection{$\vol_\up$} \label{subsec:upper}

We start by proving our formula for $\vol_\up$ which follows the corresponding proof in \cite{doi:10.1137/16M110335X}. By \cite{MR0362054} and the Cauchy-Binet theorem we know that
\begin{align*}
|\P \Omega_\up| = 2^{N-1} \sum_{\substack{S \subseteq E \\ |S| = N-1}} |\det((\P \BB)_S)| = 2^{N-1} \sum_{\substack{S \subseteq E \\ |S| = N-1}} \biggr\rvert \sum_{i=1}^N \det(\P_i) \det(\BB_{i,S}) \biggr\rvert.
\end{align*}
Since $\P$ is an orthonormal projection onto $\1^\perp$ we have know $\det(\P_i) = \pm (-1)^{i-1}/\sqrt{N}$. Furthermore by Lemma \ref{lem:H} we know that $\det(\BB_{i,S})$ alternates sign in $i$ and has magnitude $\gamma(\Hh)$ if $\Hh \in \HH_i$ with $E_H = S$ and zero otherwise. This gives us the result.


\begin{remark}
At the end of Section \ref{sec:notation} we mentioned that in the symmetric case $\gamma(H) = 0$ unless $H \in \SS \TT$. Furthermore, due to symmetry each spanning tree of $\G$ results in $N$ directed spanning trees of $G$ which allows us to recover our result in \cite{doi:10.1137/16M110335X}, namely,
\begin{align*}
\vol_\up = 2^{N-1} \sqrt{N} \sum_{T \in \mathcal{ST}} \gamma(T).
\end{align*}
\end{remark}



\subsection{$\vol_\up'$ and $\vol_\low$} \label{subsec:upper}

In this section we obtain formulas for $\vol_\up'$ first and then $\vol_\low$. By Proposition \ref{prop:bounds} and the Cauchy-Binet formula we have that
\begin{align*}
\int_{\P_1 [0,2\pi)^{N-1}} |\det( \P \J(\P^\top \bp)\P^\top)| d\bp &= \frac{1}{\sqrt{N}} \int_{[0,2\pi)^{N-1}} |\det( \P \J(\bt)\P^\top)| \biggr\rvert_{\theta_1=0} d\bt
\\
&\le \frac{1}{\sqrt{N}} \int_{[0,2\pi)^{N-1}} \sum_{\T \in \ST} \gamma(\T) \biggr\rvert \prod_{e \in \E_{\T}} \cos \theta_e \biggr\rvert_{\theta_1 = 0} d\bt
\\
&\le \frac{(2\pi)^{N-1}}{\sqrt{N}} \sum_{\T \in \ST} \gamma(\T).
\end{align*}

For $\vol_\low$ we define the sets $\Theta_i = \{ \bt \in [0,\pi/2)^n : \theta_i = 0 \}$ and conclude that
\begin{align*}
\P \Theta_\low = \bigcup_{i=1}^N \P \Theta_i \quad \text{and} \quad |\P  \g(\Theta_i) \cap \P \g(\Theta_j)| = 0 \quad \text{if} \quad i \ne j.
\end{align*}
Therefore from the identity $\P \g(\Theta_i) = \tilde{\g}(\P_i [0,\pi/2]^{N-1})$ we see that it suffices to show that
\begin{align*}
\int_{\P_i [0,2\pi)^{N-1}} |\det( \P \J(\P^\top \bp)\P^\top)| d\bp = \frac{1}{\sqrt{N}} \sum_{\T \in \ST} \gamma(\T) \prod_{j \ne i} I(\deg_j(\T)).
\end{align*}
This however follows from the Cauchy-Binet formula, the Matrix Tree Theorem, and Lemma 8 in \cite{doi:10.1137/16M110335X} since
\begin{align*}
\int_{\P_i [0,2\pi)^{N-1}} |\det( \P \J(\P^\top \bp)\P^\top)| d\bp &= \frac{1}{\sqrt{N}} \int_{[0,\pi/2)^{N-1}} |\det(\P \J(\bt) \P^\top)|\biggr\rvert_{\theta_i=0} d\bt
\\
&= \frac{1}{\sqrt{N}} \int_{[0,\pi/2)^{N-1}}  \sum_{\T \in \ST} \gamma(\T) \prod_{e \in \E_{\T}} \cos \theta_e \biggr\rvert_{\theta_i = 0} d\bt
\\
&= \frac{1}{\sqrt{N}} \sum_{\T \in \ST} \gamma(\T) \prod_{e \in \E_{\T}} \int_{[0,\pi/2)^{N-1}} \cos \theta_e \biggr\rvert_{\theta_i=0} d\bt.
\end{align*}


\begin{remark}
Again in the symmetric case every spanning tree of $\G$ results in $N$ spanning trees of $G$ all with the same degree vector $\deg(T) = \deg(\T)$. Therefore we again recover our result from \cite{doi:10.1137/16M110335X},
\begin{align*}
\vol_\low = \sqrt{N} \sum_{T \in \mathcal{ST}} \gamma(T) \sum_{i=1}^N \prod_{j \ne i} I(\deg_j(T)).
\end{align*}
\end{remark}


\section{Comparisons} \label{sec:comp}

In this section we start by comparing results for the symmetric case in \cite{doi:10.1137/16M110335X} and asymmetric case here. Then we compare our two upper bounds $\vol_\up$ and $\vol_\up'$. First note that the lower bound in Theorem \ref{thm:main} and in \cite{doi:10.1137/16M110335X} are both sums over spanning trees and that the trees contribute in the same way. Therefore Theorem 13 in \cite{doi:10.1137/16M110335X}, which shows that $\vol_\low$ is Schur-convex on trees with respect to the degree vector $\deg(T)$, and its consequences immediately translate to the asymmetric case. In particular, we have upper and lower bounds for $\vol_\lock$ and $\vol_\stab$ in terms of the number of directed spanning trees.

Another theorem that partially translates from the symmetric case to the asymmetric case is Theorem 18 in \cite{doi:10.1137/16M110335X} which states that for dense networks that the logarithms of the volumes $\vol_\lock$ and $\vol_\stab$ are asymptotic to the logartithm of the number of spanning trees. In the asymmetric case we get the following theorem.

\begin{theorem} \label{thm:dense}
Let $\G_N$ be a family of networks for which $\# \ST_{\G_N}$ has super-exponential growth. Then
\begin{align}
\lim_{N \rightarrow \infty} \frac{\log \vol_\lock(\G_N)}{\log \# \ST_{\G_N}} = \lim_{N \rightarrow \infty} \frac{\log \vol_\stab(\G_N)}{\log \# \ST_{\G_N}} = 1.
\end{align}
(Here, as in \cite{doi:10.1137/16M110335X}, all edge weights are taken to be one.)
\end{theorem}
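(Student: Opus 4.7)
The plan is to sandwich the quantities of interest between $\vol_\low$ and $\vol_\up'$ from Theorem~\ref{thm:main}, and to show that both extremes have logarithms asymptotic to $\log \# \ST_{\G_N}$. I will read super-exponential growth as $\log \# \ST_{\G_N}/N \to \infty$, so that any factor of the form $e^{O(N)}$ becomes negligible after dividing by $\log \# \ST_{\G_N}$.

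First I would handle the upper side. With all edge weights equal to one, $\gamma(\T) = 1$ for every directed spanning tree of $\G_N$, so Theorem~\ref{thm:main} gives
\begin{align*}
\vol_\up' = \frac{(2\pi)^{N-1}}{\sqrt{N}} \# \ST_{\G_N}.
\end{align*}
Taking logarithms yields $\log \vol_\up' = \log \# \ST_{\G_N} + O(N)$, so $\log \vol_\up' / \log \# \ST_{\G_N} \to 1$. Since $\vol_\lock \le \vol_\up'$ by \eqref{eq:inequality}, this controls the $\limsup$ side.

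For the lower side I need a uniform positive lower bound on $I(x)$. The plan is to observe that $\cos\theta + \sin\theta = \sqrt{2}\sin(\theta + \pi/4) \ge 1$ on $[0,\pi/2]$, which immediately gives $I(x) \ge I(0) = \pi/4$ for every $x \ge 0$. For any $\T \in \ST_{\G_N}$ this implies $\prod_{j \ne i} I(\deg_j(\T)) \ge (\pi/4)^{N-1}$, and summing this trivially over $i \in \{1,\dots,N\}$ in the formula from Theorem~\ref{thm:main} produces
\begin{align*}
\vol_\low \ge \sqrt{N}\left(\frac{\pi}{4}\right)^{N-1} \# \ST_{\G_N}.
\end{align*}
Hence $\log \vol_\low \ge \log \# \ST_{\G_N} - O(N)$, and super-exponential growth again absorbs the $O(N)$ term, giving $\liminf \log \vol_\low / \log \# \ST_{\G_N} \ge 1$.

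Combining with $\vol_\low \le \vol_\stab \le \vol_\lock$ closes both limits simultaneously. The only estimate that is not pure bookkeeping is the uniform bound $I(x) \ge \pi/4$, which is itself one line once $\cos + \sin$ is rewritten as a shifted sine; the rest is routine. The conceptual content is really that super-exponential growth of $\# \ST_{\G_N}$ is exactly the condition needed to swallow the polynomial and exponential prefactors appearing on both sides of \eqref{eq:inequality}, so the hypothesis is in a sense best possible for the proof strategy, and the main thing to be careful about is simply to use $\vol_\up'$ rather than $\vol_\up$ as the upper bound, since the latter carries an additional sum over $\HH$ that need not reduce cleanly to $\# \ST_{\G_N}$.
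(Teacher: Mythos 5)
Your proposal is correct and follows essentially the same route as the paper, which simply notes that each spanning tree contributes at most (and, via the lower bound, at least) an exponential factor in $N$ that is absorbed by the super-exponential growth of $\# \ST_{\G_N}$; your explicit bounds $\vol_\up' = (2\pi)^{N-1} N^{-1/2} \# \ST_{\G_N}$ and $\vol_\low \ge \sqrt{N}\,(\pi/4)^{N-1} \# \ST_{\G_N}$ (the latter via $\cos\theta + \sin\theta \ge 1$ on $[0,\pi/2]$, hence $I(x) \ge \pi/4$) just make self-contained what the paper defers to the earlier reference.
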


The proof of Theorem \ref{thm:dense} is the same as in \cite{doi:10.1137/16M110335X} and uses the fact that each spanning tree contributes at most an exponential factor in $N$. The only difference is that we don't have an analog of \cite{Bogdanowicz.2009} which allows us to conclude that the number of spanning trees grows super-exponentially for ``dense" networks.

Now we compare our two upper bounds $\vol_\up$ and $\vol_\up'$. They are both sums over subgraphs of $\G$, but $\vol_\up$ is a sum over a larger set than $\vol_\up'$. However, $\vol_\up'$ has a larger exponential factor in front. Therefore $\vol_\up$ is probably better for sparse networks, in particular directed trees. Also, as was already observed, if the edge weights are symmetric, $\gamma_{ij} = \gamma_{ji}$, then $\gamma(\Hh) = 0$ unless $\Hh \in \ST$. Therefore if the edge weights are nearly symmetric we expect that $\vol_\up$ will again be smaller than $\vol_\up'$ by the exponential factor. On the other hand computing the number of spanning trees $\# \ST$ is already computationally demanding and $\# \HH$ is even worse since $\HH$ contains $\ST$. Also, even though computing $\# \ST$ is a difficult problem, it has been studied in the literature as mentioned in the introduction. Therefore, from a computational perspective $\vol_\up'$ is a more manageable quantity.


\section{Motifs} \label{sec:motifs}

In \cite{10.3389/fncom.2011.00028} Nykamp etal. numerically investigated the dependence of synchronization models on the statistical properties of certain two edge subgraphs called motifs. Since Theorem \ref{thm:main} demonstrates that synchronization is closely related to the complexity of a network, the number of spanning trees, we seek to determine if the complexity of a network depends on the presence of these motifs in the same way as found by Nykamp etal.

In directed networks there are four basic motifs, referred to as reciprocal, convergent, divergent, and chain, which are displayed in Figure \ref{fig:motifs}. The reciprocal motif consists of two edges with the same two vertices but opposite orientation. The convergent, divergent, and chain motifs each consist of two edges sharing a single vertex. In the convergent motif both edges point towards the common vertex while they point away from the common vertex in the divergent motif. Finally, in the chain motif one edge points towards the common vertex while the other edge points away.

\begin{figure}[H]
\centering
\begin{tabular}{cccc}
\includegraphics[width=37mm]{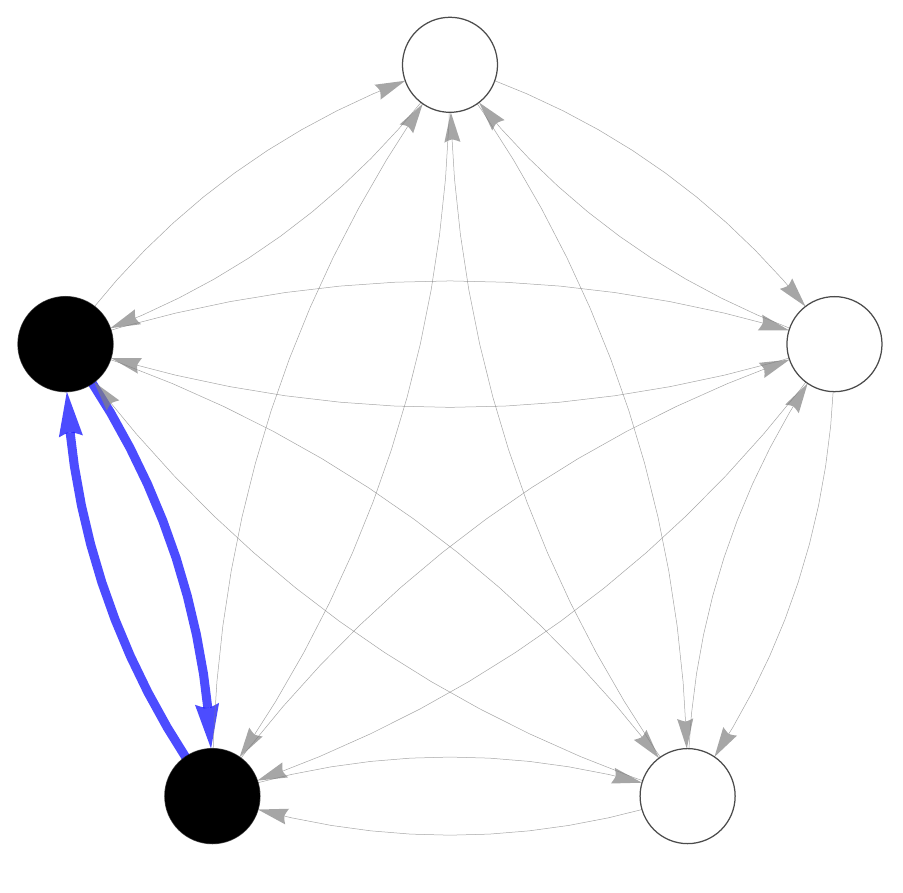} & \includegraphics[width=37mm]{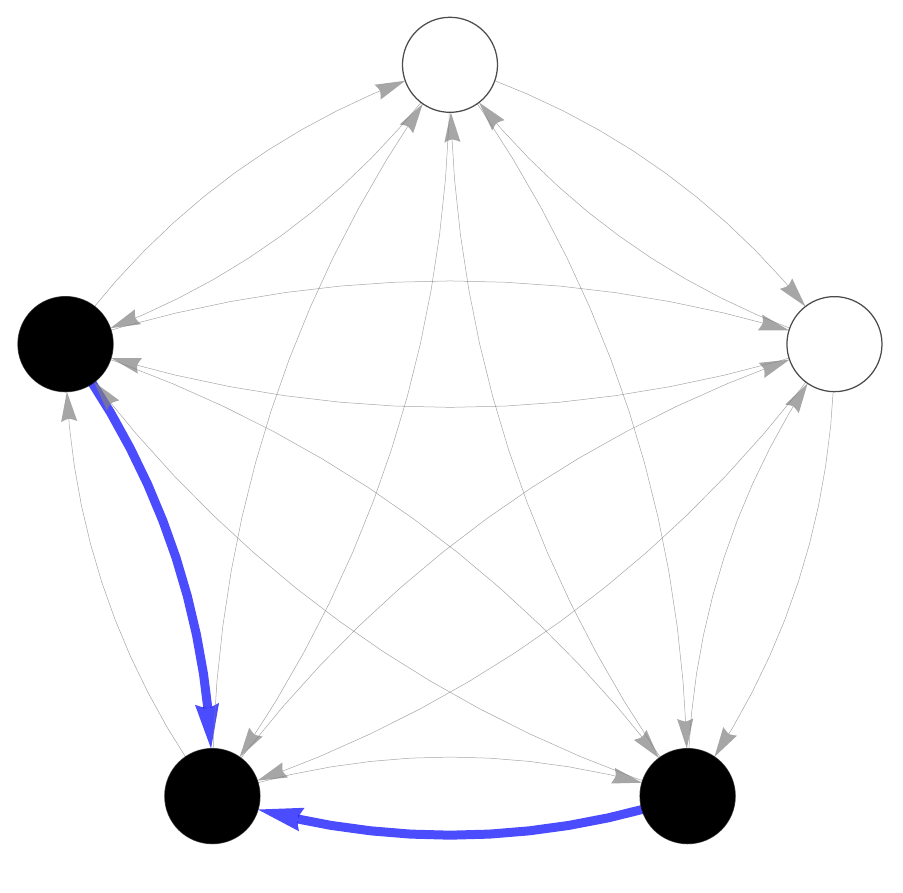} & \includegraphics[width=37mm]{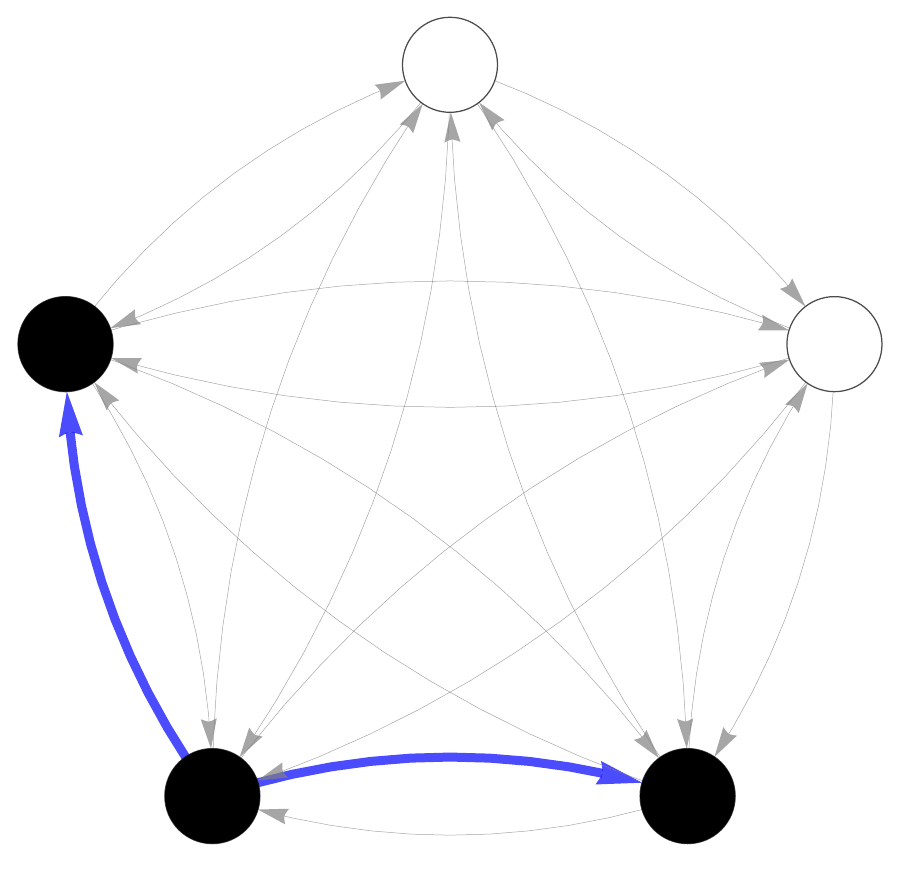} & \includegraphics[width=37mm]{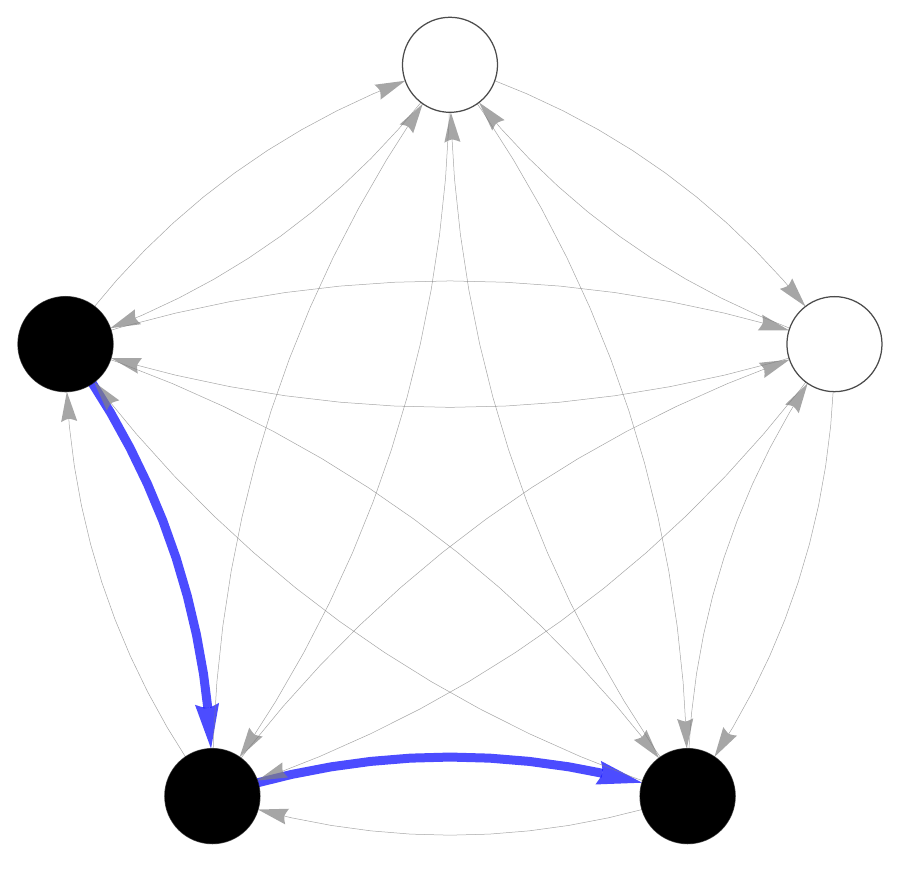}
\end{tabular}
\caption{The reciprocal, convergent, divergent, and chain motifs respectively.}
\label{fig:motifs} 
\end{figure}

Nykamp etal. measured the affect of these motifs on synchronization by evolving random initial data until it reached a steady state $\bt$ and then computing the order parameter which is defined to be the magnitude of the exponential sum
\begin{align} \label{eq:order}
r = \frac{1}{N} \biggr\rvert \sum_{j=1}^N e^{i \theta_j} \biggr\rvert
\end{align}
where $i = \sqrt{-1}$. The larger the value of $r$ the more synchronized the solution is. For example, complete synchronization, all angles are identical, is occurs if and only if $r$ is its maximum value $1$. An example when $r$ achieves its minimum value of $0$ is when the angles are uniformly distributed on a ring network. They then defined the observed statistics
\begin{gather*}
\hat{p} = \frac{N_{\text{conn}}}{N(N-1)},
\\
\hat{p}^2(1+\hat{\alpha}_{\text{recip}}) = \frac{N_{\text{recip}}}{N(N-1)/2},
\\
\hat{p}^2(1+\hat{\alpha}_{\text{conv}}) = \frac{N_{\text{conv}}}{N(N-1)(N-2)/2},
\\
\hat{p}^2(1+\hat{\alpha}_{\text{div}}) = \frac{N_{\text{div}}}{N(N-1)(N-2)/2},
\\
\hat{p}^2(1+\hat{\alpha}_{\text{chain}}) = \frac{N_{\text{chain}}}{N(N-1)(N-2)}.
\end{gather*}
where $N_{\text{conn}}$, $N_{\text{recip}}$, $N_{\text{conv}}$, $N_{\text{div}}$, $N_{\text{chain}}$ denote the number of edges and respective motifs in the network. They found that synchronization is not significantly affected by $\hat{\alpha}_{\text{recip}}$ or $\hat{\alpha}_{\text{div}}$, but that it increases as $\hat{\alpha}_{\text{chain}}$ increases and $\hat{\alpha}_{\text{div}}$ decreases. As a result we seek to numerically test if the complexity of the network depends on these five observed statistics in the same way.

Nykamp, Zhao etal. \cite{10.3389/fncom.2011.00028,Zhao.2012} discuss a method of generating a class of random networks which are a natural generalization the Erd\H{o}s-R\'{e}nyi random networks using these motifs. To construct an Erd\H{o}s-R\'{e}nyi random network we begin with a vertex set and then include every possible edge independently with probability $p$. They introduced correlations between these edges in order to change the probability of the two edge motifs from $p^2$ which is its value in the uncorrelated case. They did this by generating a vector of independent standard normal random variables and then acting on this vector by an appropriate matrix to derive a vector of correlated standard normal random variables. Finally, they threshold these random variables to determine which edges to include. This method of generating random networks is also discussed by Bronski and the author \cite{2018arXiv180805076B} where we use the algebraic structure of a coherent configuration to simplify computations.

\begin{figure}
\centering
\begin{tabular}{ccc}
\includegraphics[width=5.5cm,height=4cm,keepaspectratio]{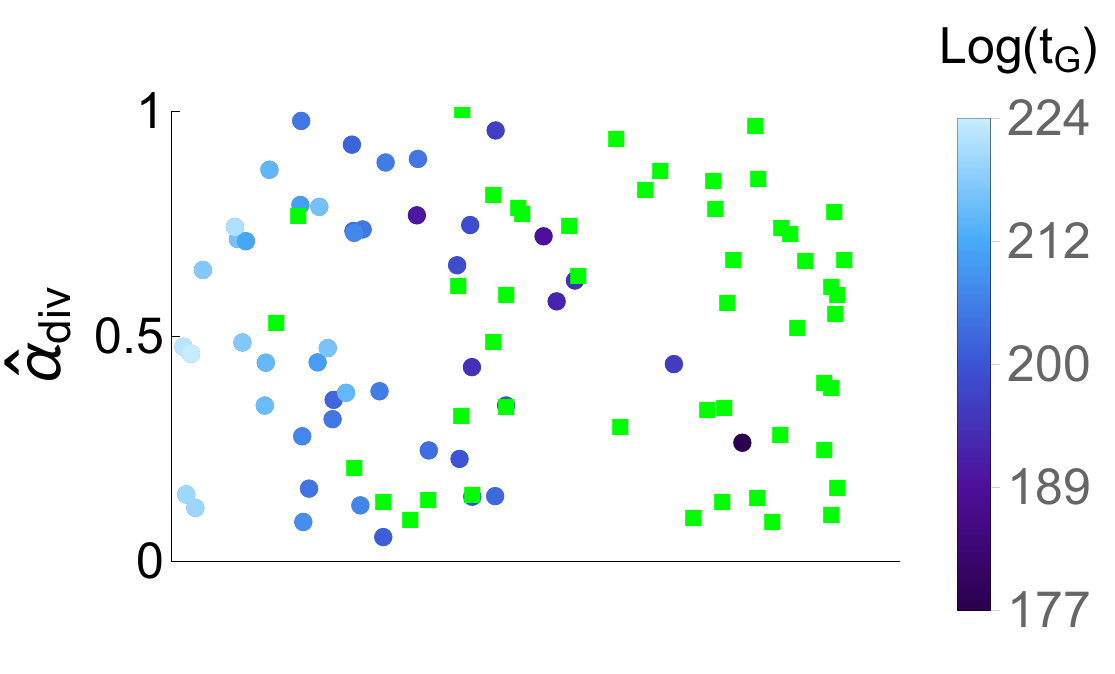}
\\
\includegraphics[width=5.5cm,height=4cm,keepaspectratio]{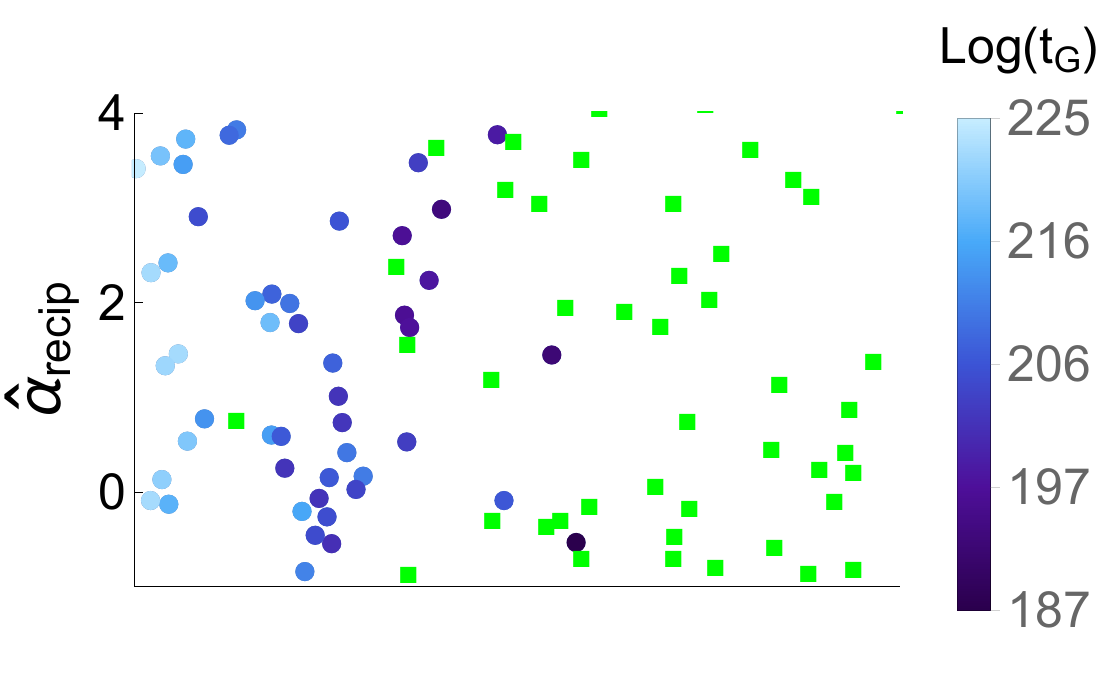} & \includegraphics[width=5.5cm,height=4cm,keepaspectratio]{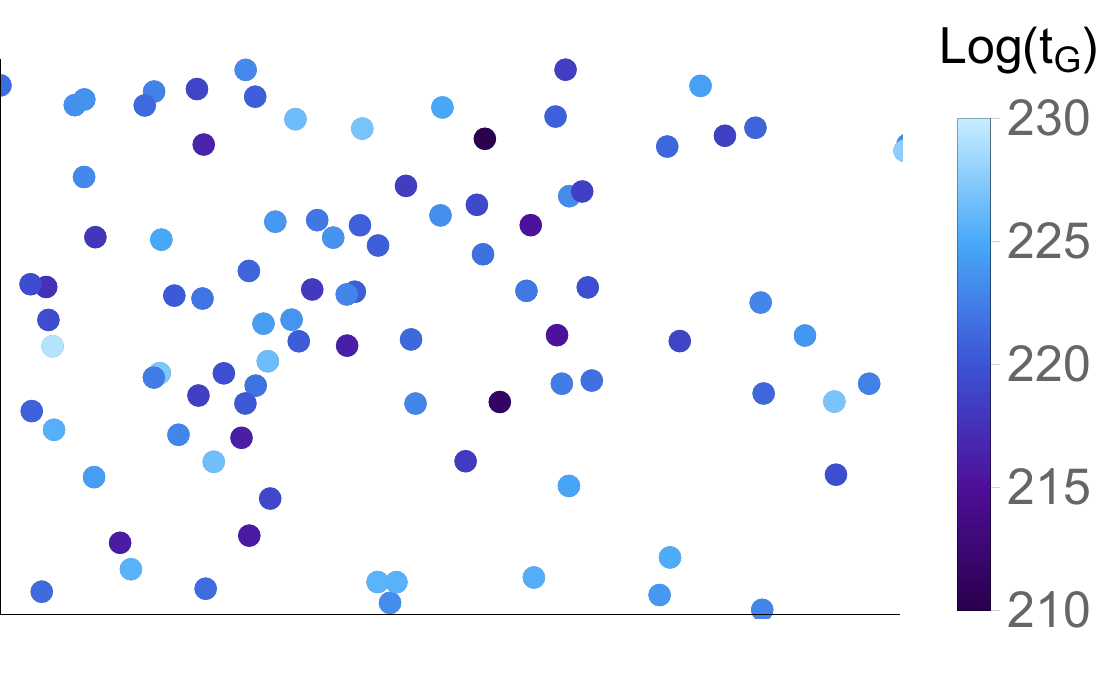}
\\
\includegraphics[width=5.5cm,height=4cm,keepaspectratio]{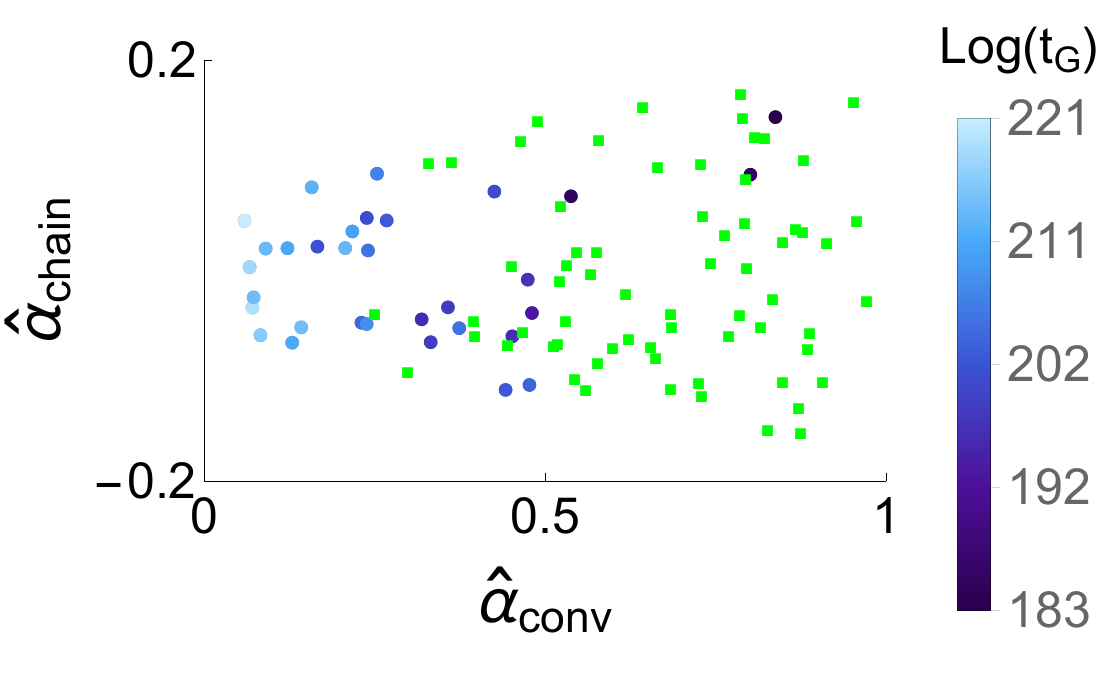} & \includegraphics[width=5.5cm,height=4cm,keepaspectratio]{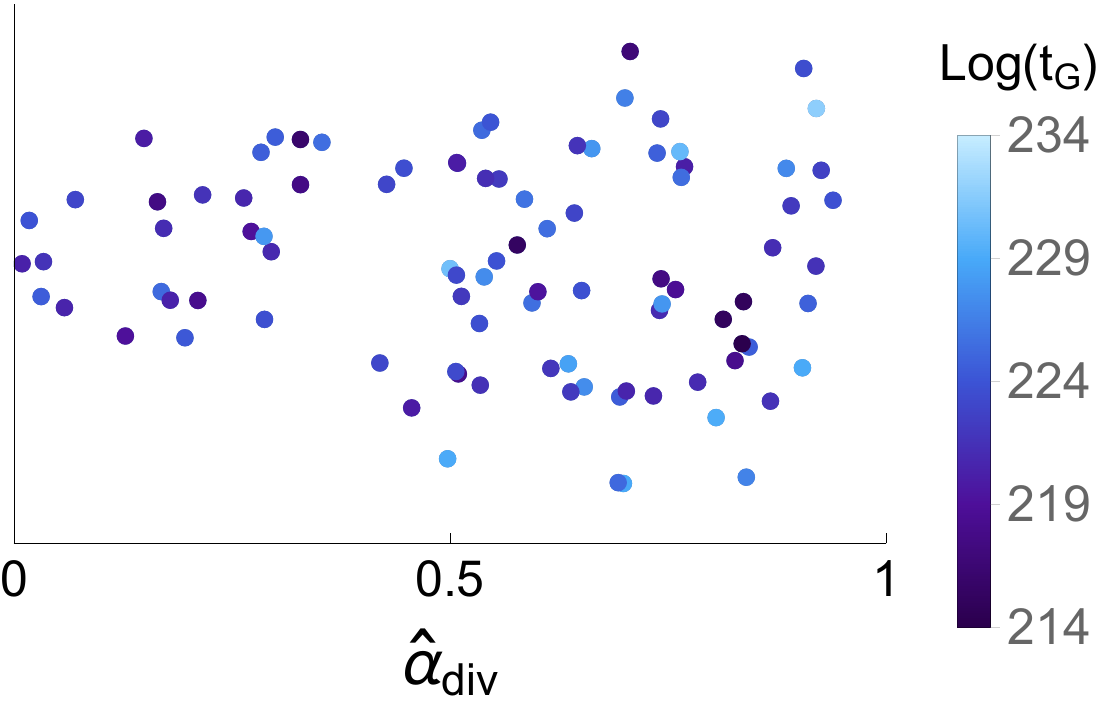} & \includegraphics[width=5.5cm,height=4cm,keepaspectratio]{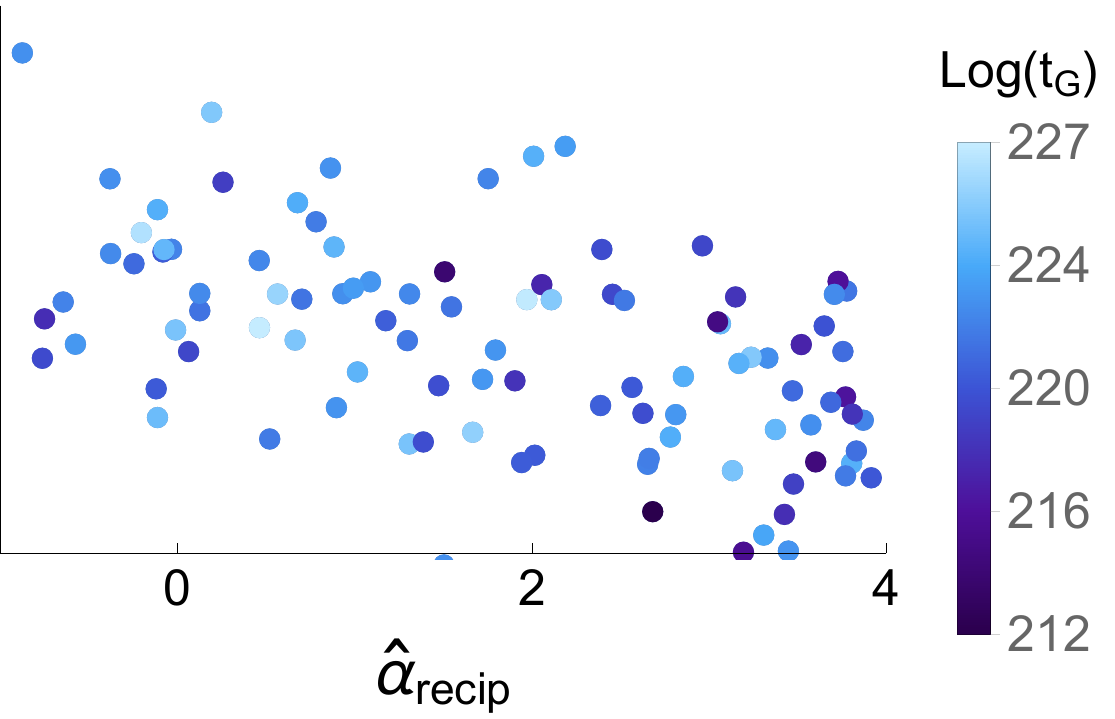}
\end{tabular}
\caption{Plots numerically simulating how the number of directed spanning trees depends on each pair of the observed statistics $\hat{\alpha}_{recip}$, $\hat{\alpha}_{conv}$, $\hat{\alpha}_{div}$, and $\hat{\alpha}_{chain}$. Each plot contains data from 100 random graphs each with $N = 100$ vertices and $p = 0.1$ (this is the theoretical probability of an edge being included). A green data point represents a graph without a directed spanning tree while the legend displays how the color depends on the logarithm of the number of directed spanning trees when they exist. In each plot the $\hat{\alpha}$'s for the two motifs not being compared are nearly zero (there theoretical value is zero).}
\label{fig:data}
\end{figure}

We plot the complexity of random networks verses the observed statistics in Figure \ref{fig:data}. It appears that the complexity decreases as $\hat{\alpha}_{conv}$ inreases and is relatively unaffected by $\hat{\alpha}_{recip}$, $\hat{\alpha}_{div}$, and $\hat{\alpha}_{chain}$. This is in agreement with \cite{10.3389/fncom.2011.00028} for all but the chain motif which was found by in \cite{10.3389/fncom.2011.00028} to result in an increase in synchrony.

\section{conclusion}

In this paper we generalized our results in \cite{doi:10.1137/16M110335X} to the asymmetric case. We found upper and lower bounds for the projection of the synchronization region onto the mean zero hyperplane. Our lower bound is a sum over directed spanning trees while our upper bound is a sum over a class of directed subgraphs containing directed spanning trees. We then used numerical simulations to determine the dependence of the number of directed spanning trees of a graph on the presence of four two edge motifs in the network. We found that this dependence was similar to the dependence of synchronization on these motifs in \cite{10.3389/fncom.2011.00028} with the exception of the chain motif.



\section{Acknowledgements}

The author gratefully acknowledges support under NSF grant DMS1615418.



\section{Appendix}

\begin{proof}[Proof of Lemma \ref{lem:H}]
First suppose that $\hat{K} = \T \sqcup \hat{J}$ is a subgraph of $\G$ such that $E_K = S$ and $K = T \sqcup J$ where $T$ is a tree. We show that if $i$ is not the root of $\T$, then $\det(\BB_{i,S}) = 0$. First if $\T$ is a single vertex, then the entire $i$th is zero hence the result. Now if $\T$ has more than one vertex we can choose a leaf of $\T$ and expand the determinant along the column which represents the edge that connects it to the rest of $\T$. If $i$ is not connected to another vertex in $\T$ by an edge in $G$ directed from $i$ to the other vertex, then this results in a determinant of zero. If it does, then we get the weight of the edge multiplied by the same determinant but for a graph with this edge and leaf removed. We can continue this process until $\T$ is reduced to a single vertex in which case we are done.

Therefore since $S$ contains $N-1$ edges and can not represent a subgraph containing two or more disjoint trees, it must therefore represent a subgraph of $\G$ of the form
\begin{align*}
\Hh = \T \sqcup \bigsqcup_{\R} \R \bowtie_r \T_r,
\end{align*}
Recall that this notation represents the disjoint union of a tree $\T$ and subgraphs with a single cycle, namely, a ring $\R$ with trees $\T_r$ attached to $\R$ at the vertices $r$. From our above argument of expanding along columns with edges connecting leaves we know that the determinant is zero unless $i$ is the root of $\T$ and $r$ is the root of $\T_r$. Also, the contribution of the trees $\T$ and $\T_r$ are the weights $\gamma(\T)$ and $\gamma(\T_r)$.

Therefore it remains to consider the sub determinants of $\BB$ with edges $\E_{\R}$ forming a single cycle in $G$. For simplicity we label the vertices in $\R$ cyclically as $1,\dots,\ell$ and assume that $\R$ is a double directed. In this case our sub determinant becomes
\begin{align*}
|\det
\begin{pmatrix}
-\gamma_{12} & 0 & 0 & \hdots & 0 & 0 & -\gamma_{1\ell}
\\
\gamma_{21} & -\gamma_{23} & 0 & \hdots & 0 & 0 & 0
\\
0 & \gamma_{32} & -\gamma_{34} & \hdots & 0 & 0 & 0
\\
\vdots & \vdots & \vdots & \ddots & \vdots & \vdots & \vdots
\\
0 & 0 & 0 & \hdots & -\gamma_{\ell-2 \ell-1} & 0 & 0
\\
0 & 0 & 0 & \hdots & \gamma_{\ell-1 \ell-2} & -\gamma_{\ell-1 \ell} & 0
\\
0 & 0 & 0 & \hdots & 0 & \gamma_{\ell \ell-1} & \gamma_{\ell 1}
\end{pmatrix}|
=
|\gamma_{1\ell} \gamma_{\ell \ell-1} \dots \gamma_{32} \gamma_{21} - \gamma_{12} \gamma_{23} \dots \gamma_{\ell-1 \ell} \gamma_{\ell 1}|
\end{align*}
by expanding the along the last column. By then setting edge weights equal to zero for edges not contained in $\G$ we see that the determinant is zero unless $\R$ is a single directed cycle or double directed cycle as described in Definition \ref{def:H}. This again contributes the weight $\gamma(\R)$ completing the first part of the proof.

Now it remains to show that our determinant alternates sign in $i$. Since we only need to consider the case when $i$ is a vertex of $\T$ we can suppose for simplicity that $\G$ is in fact a directed tree. Now since $\1$ belongs to the null space of $\J(\0) = \BB \B^\top$ we know that there exists a vector $\vv$ in the left null space of $\J(\0)$. But $-\J(\0)$ is negative definite on the orthogonal complement of $\1$, and therefore all of its left eigenvectors have eigenvalues whose real parts are strictly negative. In other words, zero, with left eigenvector $\vv$, is the eigenvalue with the largest real part. Furthermore, all of its off diagonal entries are positive by assumption, and therefore we can add a sufficiently large multiple of the identity matrix to $-\J(\0)$ to obtain a non-negative matrix. This simply shifts the spectrum to the right. But now the shifted zero eigenvalue is the Perron-Frobenius eigenvalue and so we can take $\vv$ to have strictly positive entries by the Perron-Frobenius theorem. Now since the left null space of $\B^\top$ is trivial we know that $\vv^\top \BB = \0$. Let $\rr_k$ for vertices $k$ denote the rows of $\BB$ and let $i$ and $j$ denote any two vertices. We can write $\rr_i = -\sum_{k \ne j} \frac{v_k}{v_i} \rr_k$. This allows us to replace $\rr_i$ with $-\frac{v_j}{v_i} \rr_j$. Note that all other rows can be removed by properties of the determinant. Then swapping rows and factoring out $-\frac{v_j}{v_i}$ shows that the determinants $\det(\BB_{i,S})$ and $\det(\BB_{j,S})$ have signs differing by $(-1)^{i-j}$.
\end{proof}



\bibliography{DirectedKuramotoVolume}
\bibliographystyle{plain}



\end{document}